\newcommand*\circled[1]{\tikz[baseline=(char.base)]{
            \node[shape=circle,draw,inner sep=1pt] (char) {#1};}}
\theoremstyle{plain}
\newtheorem{theorem}{Theorem}[section]
\newtheorem{lemma}[theorem]{Lemma}
\theoremstyle{remark}
\newtheorem{definition}[theorem]{Definition}
\newtheorem{assumption}[theorem]{Assumption}
\newtheorem{proposition}[theorem]{Proposition}
\newtheorem{example}{Example}
\begin{document}

\def\spacingset#1{\renewcommand{\baselinestretch}%
{#1}\small\normalsize} \spacingset{0}


\title{\bf Model-Based Regression Adjustment with Model-Free Covariates for Network Interference}
\author{Kevin Han \vspace{.2cm} \hspace{.2cm}\\
    Department of Statistics, Stanford University \vspace{.2cm}\\
    and \vspace{.2cm} \\
    Johan Ugander \vspace{.2cm}\\ 
    Department of Management Science and Engineering, Stanford University}
\maketitle

\noindent%
{\it Keywords:}  Causal Inference with Interference, SUTVA, A/B Testing, Regression Adjustment, Social Network Analysis


\begin{abstract}
{When estimating a Global Average Treatment Effect (GATE) under network interference, units can have widely different relationships to the treatment depending on a combination of the structure of their network neighborhood, the structure of the interference mechanism, and how the treatment
was distributed in their neighborhood. In this work, we introduce a sequential
procedure to generate and select graph- and treatment-based covariates for
GATE estimation under regression adjustment. We show that it is possible to simultaneously achieve
low bias and considerably reduce variance with such a procedure. 
To tackle inferential complications caused by our feature generation and selection process, we introduce a way to construct confidence intervals based on
a block bootstrap. We illustrate that our selection procedure and subsequent
estimator can achieve good performance in terms of root mean squared error
in several semi-synthetic experiments with Bernoulli designs, comparing favorably to an oracle estimator that takes advantage of regression adjustments
for the known underlying interference structure.
We apply our method to a real world experimental dataset with strong evidence of
interference and demonstrate that it can estimate the GATE reasonably well
without knowing the interference process a priori.}
\end{abstract}

\newpage
\spacingset{1}

\section{Introduction}
In standard experiments, researchers typically assume that one unit’s assignment does not affect another unit’s response; this is usually referred to as the assumption of no interference assumption \cite[Chapter~2]{cox58} or the stable unit treatment value assumption (SUTVA) \citep{rubin1974estimating}. However, when experimental units interact with each other, SUTVA is often untenable. Violation of SUTVA has been found in many applications, including politics \citep{sinclair2012}, education \citep{hong2006,rosenbaum2007}, economics \citep{sobel2006,manski2013}, and public health \citep{halloran1995}. Recently, technology companies developing products with social or market interactions have developed methods to manage the considerable interference in their product experiments \citep{EcklesKarrerUgander+2017,pouget-abadie19,karrer2021}. In practice, researchers look for an underlying structure that limits the scope of interference and estimation of causal effects proceeds from assuming the structure. \cite{aronow2017} propose to use a lower dimensional representation of the interference mechanism and estimate causal effects accordingly. In the no-interference literature, regression adjustment has shown to be effective in both theory \citep{lin2013} and practice \citep{cuped}. \cite{Chin+2019} considers regression adjustment under interference when assuming a linear model for the outcomes, and estimate the parameters of the model from the experimental data. Such a linear model assumption is not uncommon and has also been studied in design of experiments \citep{harshawsavje2022} and interference detection \citep{pouget-abadie19}. There has also been literature on new designs that tackle the complication of interference. For example, \cite{UganderKBK13} and \cite{uganderyin2020} consider (randomized) cluster randomized designs that effectively account for interference by doing randomization on cluster level instead of unit level.

In this article, we provide a procedure to estimate the global average treatment effect by using regression adjustment without assuming the true set of features as in \cite{Chin+2019}. We generate the features for adjustment based on observed experimental data in a model-free manner. As an outline for this work, we first give preliminaries of the problem setup and motivate our method through a study of the classic linear-in-means model in econometrics. We then provide our general procedure to generate model-free covariates based on the observed experimental data. Finally, we show how to do estimation and inference for the global average treatment effect with model-free covariates. We conclude with simulations, an empirical applications, and a discussion.

\section{Setup}
Consider a randomized experiment on $n$ units where these is a simple undirected graph $G = (V, \mathcal{E})$ that describes the social network of interactions among $n$ units. The graph $G$ is associated with a symmetric matrix $A \in \mathbb{R}^n$ so that $A_{ij} = 1$ if $(i, j) \in \mathcal{E}$ and zero otherwise. Let $\mathcal{N}_i^{(k)}$ denote the $k$-hop neighborhood around each node $i \in V$. We omit the superscript when $k = 1$ and let $d_i$ denote the degree of each node (or equivalently, $d_i = |\mathcal{N}_i|)$. We denote by $W_i$ the random assignment and $x_i \in \mathcal{X}$ the pre-treatment covariates for unit $i$. We assume that the experimental population is the population of interest and hence view pre-treatment covariates as fixed. We only consider binary treatments but note that extensions to non-binary treatments are straightforward. Throughout, we use lower case letters with the appropriate
subscript for realizations of the random variables and for non-random quantities.

We work under the Rubin causal model \citep{rubin1974estimating, holland1986, imbens_rubin_2015}. For every unit $i$, we associate it with potential outcomes $Y_i(w) \in \mathbb{R}$ for $w \in \{0, 1\}^n$. We are interested in the following causal estimand that we call the Global Average Treatment Effect (GATE):
\begin{equation}
    \tau = \frac{1}{n}\sum_{i = 1}^{n}\mathbb{E}[Y_i(\mathbf{1}) - Y_i(\mathbf{0})]. 
    \label{eq: gate}
\end{equation}
Here $\mathbf{1}$ denotes the $n$-dimensional ones vector and similarly for $\mathbf{0}$. The GATE estimand, also known as the Total Treatment Effect (TTE) in some work \citep{yu2022}, measures the overall effect of the intervention on the experimental units. Under SUTVA, the assignments of other units won't affect one's response and hence there are only two potential outcomes per unit, $Y_i(0)$ and $Y_i(1)$. Under SUTVA, the GATE is then simply the average treatment effect (ATE). 
When there is interference along a network, there may be up to $2^n$ different potential outcomes per unit. In the absence of further assumptions, it is impossible to observe $Y_i(\mathbf{1})$ for some unit $i$ and also observe $Y_j(\mathbf{0})$ for any other unit $j$.

In this work we take a regression perspective and assume two functions $f_0$ and $f_1$ such that for each unit $i$ and each assignment vector $w \in \{0, 1\}^n$,
\begin{equation}
    Y_i(w) = w_if_1(i, w, x_i, G) + (1 - w_i)f_0(i, w, x_i, G) + \epsilon_i, 
    \label{eq: model}
\end{equation}
with $\epsilon_i$'s being exogenous, i.e.\ $\mathbb{E}[\epsilon_i | w] = 0$. The functions $f_0$ and $f_1$ each take as input the node label $i$, the assignment vector $w$, the covariate vector $x_i$ and graph $G$. This approach uses exposure mappings \citep{aronow2017} as functions that map an assignment
vector $w$ and $x_i$ to a specific exposure value so that if two assignment vectors $w$ and $w'$ induce the same exposure value for a unit then they have the same value of potential outcome. Since the potential outcomes only depend on the exposure values, we can view them as a function of exposure values and we can rewrite the potential outcomes as in \eqref{eq: model}. 
Given \eqref{eq: model}, since functions $f_1$ and $f_0$ are shared across all units, we can use the treated units to estimate $f_1$ and control units to estimate $f_0$. Suppose $\hat{f}_0$ and $\hat{f}_1$ are two estimates of $f_0$ and $f_1$ respectively, then a natural estimator of the GATE would be
\begin{equation*}
    \hat{\tau} = \frac{1}{n}\sum_{i = 1}^{n}[\hat{f}_1(i, \mathbf{1}, x_i, G) - \hat{f}_0(i, \mathbf{0}, x_i, G)].
\end{equation*} 
Unfortunately, estimation of the GATE will be impossible without any further assumptions on the structure of the functions $f_0$ and $f_1$\footnote{\cite{basse-airoldi} has a discussion from an inference perspective.}. To motivate our structural assumptions on $f_0$ and $f_1$, we look at the following example.
\begin{example} [Linear-in-means model] \label{ex: linear-in-means}
Consider the structural model \citep{manski1993, 10.7551/mitpress/6294.003.0005, BRAMOULLE200941}
\begin{equation}
    \mathbf{y} = \alpha\mathbf{1} + \beta\tilde{A}\mathbf{y} + \gamma\mathbf{w} + \delta \tilde{A}\mathbf{w} + \boldsymbol{\epsilon}, \quad \mathbb{E}[\boldsymbol{\epsilon} | \mathbf{w}] = 0, \label{eq: lim}
\end{equation}
where $\mathbf{y}$ is the $n \times 1$ outcome vector, $\tilde{A}$ is the degree-normalized adjacency matrix, i.e., $\tilde{A}_{ij} = A_{ij}/d_i$, $\mathbf{w}$ is the assignment vector, and $(\alpha, \beta, \gamma, \delta)$ are parameters. \cite{BRAMOULLE200941} show that under some mild conditions on the coefficients and the graph $G$, we can rewrite the above model as
\begin{equation}
    \mathbf{y} = \alpha/(1 - \beta)\mathbf{1} + \gamma\mathbf{w} + (\gamma\beta + \delta)\sum_{j = 0}^{\infty}\beta^j\tilde{A}^{j + 1}\mathbf{w} + \sum_{j = 0}^{\infty}\beta^j\tilde{A}^{j + 1}\boldsymbol{\epsilon}. \label{eq: lim_expansion}
\end{equation}
Note that now the outcome is linear in the assignment vector $\mathbf{w}$ as well as $\{\tilde{A}^{j + 1}\mathbf{w}\}_{j = 0}^{\infty}$. Let $f_0(i, w, x_i, G) = f_1(i, w, x_i, G) = \alpha/(1 - \beta) + \gamma w_i + (\gamma\beta + \delta)\sum_{j = 0}^{\infty}\beta^j\tilde{A}^{j + 1}w$ and notice that $\mathbb{E}[\sum_{j = 0}^{\infty}\beta^j\tilde{A}^{j + 1}\boldsymbol{\epsilon} | w] = 0$. Thus, the linear-in-means model \eqref{eq: lim} can be written in the form of \eqref{eq: model}.
\end{example}

While in this example the linear model is infinite-dimensional, the linear structure of \eqref{eq: lim_expansion} motivates us to look at linear models for both $f_0$ and $f_1$. To make it formal, we make the following definition:
\begin{definition} [Linear interference]
We say that the model $\mathcal{Y} = \{Y_i(w): w \in \{0, 1\}^n, i \in [n]\}$ exhibits {\it linear interference} if there exists a function $g: [n] \times \{0, 1\}^n \times \mathcal{X} \times \mathcal{G} \rightarrow \mathbb{R}^K$ and $\theta_0 \in \mathbb{R}^K$, $\theta_1 \in \mathbb{R}^K$ such that $f_0(i, w, x_i, G) = \theta_0^Tg(i, w, x_i, G)$ and $f_1(i, w, x_i, G) = \theta_1^Tg(i, w, x_i, G)$. We call each coordinate function $g_j$ of $g$ a {\it feature} of the interference.
\label{def:linear}
\end{definition}
Despite the simplicity of linear interference, from a graph perspective it can be shown that convolutions on graphs can be well-approximated by linear expansion \citep{HAMMOND2011129}. Such a linear interference assumption is not uncommon \citep{cuped, pouget-abadie19, Chin+2019}. \cite{Chin+2019} shows how to do inference once we have access to the oracle $g$ while \cite{pouget-abadie19} give a testing procedure to detect network interference under linear interference. Moreover, because we are interested in the quality of our estimated functions $\hat{f}_0$ and $\hat{f}_1$ for (only) $w = \mathbf{0}, \mathbf{1}$, we are effectively attempting generalization. Simple models usually generalize well \citep{Bousquet2004, 5955}, and thus linear interference provides credibility of inference without losing flexibility in a world where $g$ can be arbitrarily complex.

Before proceeding, we can simplify \eqref{eq: model} somewhat. Note that
\begin{align}
    Y_i(w) &= w_if_1(i, w, x_i, G) + (1 - w_i)f_0(i, w, x_i, G) + \epsilon_i \notag \\
    &= w_if_1(i, w^{(i \rightarrow 1)}, x_i, G) + (1 - w_i)f_0(i, w^{(i \rightarrow 0)}, x_i, G) + \epsilon_i \notag \\
    &= w_i\tilde{f}_1(i, w^{(-i)}, x_i, G) + (1 - w_i)\tilde{f}_0(i, w^{(-i)}, x_i, G) + \epsilon_i, \label{eq: wlog_form}
\end{align}
where $w^{(i \rightarrow t)}$ denotes the $n-$dimensional vector that replaces $w_i$ by $t$ and $\tilde{f}_t$ is a function of $i, w^{(-i)}$, $x_i$ and $G$ only. Therefore, without loss of generality, we assume that the domain of $g$ and hence the domain of $f_0$ and $f_1$ is $[n] \times \{0, 1\}^{n-1} \times \mathcal{X} \times \mathcal{G}$.

From here on, for presentational simplicity we will omit the pre-treatment covariates $x_i$ in our discussion. Extensions to the case of including pre-treatment covariates will be discussed when not obvious. As a result, $g$ is a function of the node label $i$, the assignment vector $w$ and the graph $G$ only. 

We focus on design that satisfies the following uniformity assumption:
\begin{assumption}[Uniformity]
We assume that $W_i$'s are independent and $\forall i$, $\mathbb{P}(W_i = 1) = p_i$ for some $0 < p_i < 1$. 
\end{assumption}
We make this assumption to follow the common practice of using Bernoulli randomization in network experiments, e.g., \cite{karrer2021}. As an alternative, estimates from designs that accounts for network interference (for example, graph cluster randomization) may suffer from sizable variance \citep{uganderyin2020}. Hereinafter we assume that $W_i$'s are i.i.d.\ Bernoulli$(p)$ random variables with $0 < p < 1$, i.e., we work with data from experiments under a Bernoulli design.


If we know the function $g$ a priori, \cite{Chin+2019} provides a complete solution. However, if we don't know the function $g$, then there are three significant challenges, all of which we address in this work. First, how should we \textit{construct} $g$ so that the one we construct approximates the true one? Second, suppose we have many candidate functions then how should we \textit{select} among them? Third, even if we have satisfactory answers to the first two questions, how should we do inference? We will address the first two challenges in the next section and the third challenge later.

\section{Model-free covariates} \label{sec: model-free-covariates}
Now by \eqref{eq: wlog_form}, the function $g$ from Definition~\ref{def:linear} takes node label $i, w^{(-i)}$ and $G$ as input and outputs a $K$-dimensional vector, what $g$ essentially does is to produce $K$ covariates based on $w^{(-i)}$ and $G$ for each unit $i$. In this section, we describe a sequential procedure to generate and select model-free covariates. A high-level description of our method would be that we generate rich candidate features based solely on the graph structure as well as the assignment vector and select among these features based on the observed outcomes. We first give the procedure in Algorithm~\ref{alg: ReFeX-lasso} below and then explain the steps in more detail. We call the procedure ReFeX-LASSO as it builds on the graph mining technique ReFeX \citep{refex} to generate candidate features while using LASSO \citep{lasso} to select features.
\begin{algorithm}
\caption{ReFeX-LASSO} \label{alg: ReFeX-lasso}
\begin{algorithmic}[1]
\Require Graph $G = (V, \mathcal{E})$, assignment vector $w \in \{0, 1\}^n$, maximum number of iterations $T$.
\Ensure A set of covariates $S$.
\State Initialize $S = \{\}$, active feature set $A = \{\}$.
\State For each node/unit $i$, construct $m$ base features and add $m$ base features to $A$.
\For{$t = 1$ to $T$}
\State Regress $y$ on $w$ and features from $S$ and $A$ using LASSO with no penalty on features from $S$.
\State If no feature in $A$ is selected, return $S$. Otherwise, add selected features from $A$ to $S$. 
\State Recursively construct features by performing aggregations of features in $A$ over neighbors in $1$-hop neighborhood.
\State Delete old features in $A$ and add those new features to $A$.
\EndFor
\State Return $S$.
\end{algorithmic}
\end{algorithm}

ReFeX (Recursive Feature eXtraction) was originally designed to generate features for graph mining tasks and can be viewed as a recursive algorithm that starts with base features of each node in the graph and iteratively (i) adds and (ii) prunes features based on aggregations over features from neighboring nodes. ReFeX can be viewed as a simple early precursor to recent methods for graph representation learning based on graph convolution networks (GCNs) \citep{hamilton,kipf2017semisupervised}. We adopt the feature generation step in ReFeX algorithm, but replace the feature pruning part of the original algorithm by LASSO, a modification that allows us to more precisely characterize the features that are available at any given step of the algorithm. 

ReFeX has two ingredients---base features and aggregation functions. Given $w$, $\{x_i\}_{i = 1}^{n}$ and $G$, base features are those features that can be constructed by only looking at each node's 1-hop neighborhood. They can be arbitrary as long as they satisfy this local look-up constraint. Base features can be purely graph features like degree, centrality, clustering coefficient, etc. They can also be pre-treatment covariates $x_i$. Often we would also like to have base features that depend on not just one input of the function $g$ but features computed from two inputs of $g$. For example, features like the number of treated neighbors, which depends on both the assignment vector $w$ as well as the graph $G$. Or the average feature value over all neighbors, which depends on the pre-treatment covariates and $G$. With ReFeX, the base features are chosen by the analyst. Aggregation functions are functions that take features from neighboring nodes as inputs and output a single value. Hence, one aggregation function essentially computes a statistic based on the sample of feature values from neighbors. The aggregation functions again can be arbitrary and chosen by the analyst. Some common examples include min, max, sum, mean and variance \citep{refex}.

We are now ready to introduce the ReFeX-LASSO algorithm. The ReFeX-LASSO algorithm starts with two empty feature sets, the target set $S$ and the active feature set $A$. The first set $S$ stores the selected features and features in $S$ will be used for adjusting the GATE estimate. The active feature set $A$ contains features that were recursively added in the previous step and yet to be selected. At the beginning of the procedure, we construct base features for each unit $i$. Equipped with a set of base features, each time we regress the outcome vector $y$ on features from both set $S$ and set $A$ using LASSO. The LASSO regularization parameter can be chosen by cross-validation and hence we do not need extra hyper-parameters of the algorithm. Note that we do not put a penalty on features in $S$ since they have already been selected and should be kept. The intuition behind this step is that in general features generated later (pulling information from farther in the graph) should not be more predictive than features selected previously. Next, depending on the number of newly selected features, we either terminate the construction and return the current $S$ or add those selected features to $S$ and proceed with the recursive construction. We then need to generate new features and add them to $A$. To do so, we now perform aggregations on old features over all neighboring units. Finally, we add those features to $A$ and delete all old features in $A$.


The maximum number of iterations in Algorithm~\ref{alg: ReFeX-lasso} limits the distance in the graph that we can pull information from. Although each step only performs aggregations over neighbors in the 1-hop neighborhood, by repeatedly performing the aggregations we are able to construct features that are informative for the $k$-hop neighborhood. To illustrate this point, we give an example.

\begin{example}[ReFeX and multi-hop information] 
\label{ex: mean-aggregation}
Suppose one of the base features we use in ReFeX-LASSO is the fraction of treated neighbors,
\begin{equation*}
\rho_i = \frac{1}{d_i}\sum_{j \in \mathcal{N}_i}w_j,
\end{equation*}
and supposed we limit ourselves to mean aggregation, i.e., we look at each unit's neighbors and aggregate their fraction of treated neighbors using a mean function. We call this new feature $\tilde{\rho}_i$. We then have that
\begin{align*}
    \tilde{\rho}_i &= \frac{1}{d_i}\sum_{j \in \mathcal{N}_i}\rho_j \\
    &= \frac{1}{d_i}\sum_{j \in \mathcal{N}_i}\frac{1}{d_j}\sum_{k \in \mathcal{N}_j}w_k \\
    &= \sum_{j = 1}^{n}\frac{A_{ij}}{d_i}\sum_{k = 1}^{n}\frac{A_{jk}}{d_j}w_k \\
    &= \sum_{j = 1}^{n}\tilde{A}_{ij}\sum_{k = 1}^{n}\tilde{A}_{jk}w_k \\
    &= [\tilde{A}^2w]_i,
\end{align*}
where $A$ and $\tilde{A}$ are the same as defined in the linear-in-means model example from \eqref{eq: lim}. 
Note that the summand is 1 if and only if $A_{ij}$, $A_{jk}$ and $w_k$ are all 1s. In other words, if we ignore the normalizing terms, the sum essentially represents the number of length-2 paths in $G$ that start at unit $i$ and arrive at a treated unit. With the normalizing terms, it is close to the fraction of such paths among all length-2 paths that start at unit $i$. Clearly, this feature is informative for unit $i$'s 2-hop neighborhood. 
\end{example}
The above example shows the power of recursion. It allows us to have access to information about much larger neighborhoods without actually looking up all units in larger neighborhoods. In fact, the ReFeX component of ReFeX-LASSO is very efficient in terms of computational complexity \citep{refex}, making the procedure ideal for large-scale experiments on online platforms where network interference is ubiquitous. Another advantage of our algorithm is that all the covariates generated are model-agnostic or model-free---we do not generate them according to any particular response model (or graph model). Since the aggregation functions are arbitrary, ReFeX can quickly generate a very large number of features, even for modest iterations budgets $T$. Despite the fraction of treated neighbors we just saw, we are also able to get the number of treated neighbors for each unit by using sum as the aggregation function. In general, using more complicated aggregation functions yields more complicated features. Thus, the recursive step offers rich features for each unit.  

With minor modifications we can see that all pruning steps in our procedure can be grouped together and done \textit{ex ante}, i.e., before running the experiment and observing the outcomes. Then, after the experiment, we use the observed outcomes to select covariates among all the covariates we have generated. This method has certain advantages, so for completeness we give such a modified version of ReFeX-LASSO below in Algorithm~\ref{alg: post-ReFeX-lasso}, calling it post-ReFeX-LASSO.

\begin{algorithm}
\caption{post-ReFeX-LASSO} \label{alg: post-ReFeX-lasso}
\begin{algorithmic}[1]
\Require Graph $G = (V, \mathcal{E})$, assignment vector $w \in \{0, 1\}^n$, maximum number of iterations $T$.
\Ensure A set of covariates $S$.
\State Initialize $S = \{\}$.
\State For each node/unit $i$, construct $m$ base features and add $m$ base features to $S$.
\For{$t = 1$ to $T$}
\State Recursively construct features by performing aggregations of features in $S$ that were added in the previous iteration over neighbors in $1$-hop neighborhood.
\State Add those newly constructed features to $S$.
\EndFor
\State Regress $y$ on $w$ as well as features from $S$ using LASSO.
\State Keep selected features in $S$ and remove other features from $S$.
\State Return $S$.
\end{algorithmic}
\end{algorithm}

An operational advantage of post-ReFeX-LASSO is that two parts of the algorithm, feature generation and selection, can be done separately. However, in practice we find that post-ReFeX-LASSO leads to estimates with larger variance. Our explanation for this increased variance is two-fold. First, since the number of features generated from ReFeX may be large, separating the generation step and the selection step seems to make the selection step unstable. Second, many of the features generated along the way of post-ReFeX-LASSO are correlated and including all of them simultaneously leads to greater uncertainty in terms of features being selected. Hence, it leads to estimates with larger variance and we recommend ReFeX-LASSO over post-ReFeX-LASSO in all use cases when operationally feasible.

\section{Inference with model-free covariates}

In the previous section, we gave a sequential procedure that outputs a set of covariates $S$ that can be used for regression adjustments when estimating GATEs. This section devotes to inference with model-free covariates. We first discuss how to use model-free covariates returned from ReFeX-LASSO or post-ReFeX-LASSO to do regression adjustment. Following that, we show one selection property of ReFeX-LASSO. We then give theoretical properties of regression adjustment estimator of the GATE using model-free covariates as well as a simple way to construct confidence interval for $\tau$. 

\subsection{Estimation} 
Let $u_i^1, \cdots, u_i^K$ denote the $K$ covariates returned by ReFeX-LASSO or post-ReFeX-LASSO for unit $i$ and let $u_i = \left[u_i^1, \cdots, u_i^K\right]^T \in \mathbb{R}^K$ be the whole feature vector for unit $i$. We further let $\hat{g}$ be the function that maps $(i, w, x_i, G)$ to $u_i$ for each unit $i$. Finally, we denote by $n_c$ the number of control units and $n_t$ the number of treated units with $n_c + n_t = n$.

To estimate the GATE, we fit two linear models on control and treated units using $u_i$'s. Ideally, we hope that there exist vectors $\beta_0, \beta_1$ such that $\beta_0^Tu_i$ and $\beta_1^Tu_i$ are good approximations of $f_0$ and $f_1$. To be specific, we first run an ordinary least squares with observations that are from the control group only and obtain $\hat{\beta}_0$. We then run ordinary least squares again, but now with observations that are from treatment group only and obtain $\hat{\beta}_1$. Meanwhile, the features $u_i$ are all features under the treatment assignment $w$ for which the responses were collected. To estimate the GATE, we are interested not in the response under $u_i$ as it was, but $u_i$ as it would be if $w=\mathbf{0}$ or $w=\mathbf{1}$. We thus pass $\mathbf{0}$ and $\mathbf{1}$ to $\hat{g}$ to obtain the feature vectors $u_i^{gc}$ and $u_i^{gt}$ under global control and global treatment, respectively. 

Combing the coefficient estimates $\hat{\beta}_1$ and $\hat{\beta}_0$ with the vectors $u_i^{gc}$ and $u_i^{gt}$, our estimate of the GATE is then simply
\begin{equation}
    \hat{\tau} = \frac{1}{n}\sum_{i = 1}^{n}(\hat{\beta}_1^Tu_i^{gt} - \hat{\beta}_0^Tu_i^{gc}). \label{eq: reg_adj_est}
\end{equation}
Though assuming a linear model is restrictive, as we discussed previously, if we are able
to generate predictive features then the linear model can be a good approximation to the
true model. ReFeX-LASSO or post-ReFeX-LASSO helps us choose good features to adjust for and thus both reduce the variance of the estimate\footnote{In fact, in the case of no interference, \cite{lin2013} shows that doing linear adjustment can only improve the precision.} 
and reduce the bias we typically incur when ignoring interference.

\subsection{Selection properties} 
Before we delve into inference details, we first discuss selection properties of ReFeX-LASSO, drawing inspiration from prior work on Sequential LASSO~\citep{sequentialLasso}. To this end, we introduce some additional notation. For each iteration $t$, let $\{u_1^t, u_2^t, \cdots, u_{i_t}^t\}$ be the set of features generated in the ReFeX step of ReFeX-LASSO and $s_{*t}$ be the selected features at the $t$-th iteration (note that $s_{*t}$ may contain features that were selected in previous iterations and thus are not in the set $\{u_1^t, u_2^t, \cdots, u_{i_t}^t\}$). Moreover, we let $\mathcal{R}(s)$ to denote the space spanned by features in $s$.

\begin{proposition} \label{prop: selection1}
For $t \geq 1$ and any $j \in \{1, \cdots, i_{t+1}\}$, if $u_j^{t+1} \in \mathcal{R}(s_{*t})$ then $j \notin s_{*(t+1)}$.
\end{proposition}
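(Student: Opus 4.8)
The plan is to read the conclusion directly off the Karush--Kuhn--Tucker (KKT) optimality conditions of the partially-penalized LASSO solved at iteration $t+1$. At that iteration the algorithm regresses $y$ on the treatment $w$ together with the already-selected features in $s_{*t}$ (all unpenalized) and the freshly generated candidates $\{u_1^{t+1}, \dots, u_{i_{t+1}}^{t+1}\}$ (penalized with some $\lambda > 0$). Writing $\hat\beta$ for an optimal coefficient vector, $X$ for the full design, and $r = y - X\hat\beta$ for the corresponding residual, the event $j \in s_{*(t+1)}$ is precisely the event that $\hat\beta$ places a nonzero coefficient on the column $u_j^{t+1}$. It therefore suffices to show that this coefficient must be zero.

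The argument would proceed in three short steps. First I would record the stationarity condition for the unpenalized block: since the objective is smooth and penalty-free in the coefficients of the columns that span $\mathcal{R}(s_{*t})$, its gradient in those coordinates vanishes at any optimum, which says exactly that the residual is orthogonal to every feature in $s_{*t}$, i.e.\ $r \perp \mathcal{R}(s_{*t})$. Second, the hypothesis $u_j^{t+1} \in \mathcal{R}(s_{*t})$ lets me write $u_j^{t+1}$ as a linear combination of those unpenalized columns, whence $(u_j^{t+1})^\top r = 0$. Third I would invoke the KKT subgradient condition for the penalized coordinate $j$: a selected (nonzero) coefficient would force $|(u_j^{t+1})^\top r| = \lambda$, and because $\lambda > 0$ this contradicts the vanishing correlation just established. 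Hence the coefficient on $u_j^{t+1}$ is zero and $j \notin s_{*(t+1)}$.

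The main delicacy, rather than a genuine difficulty, is that the LASSO solution need not be unique, so I must ensure the argument applies to whatever solution the procedure returns. This is handled by observing that the fitted value $X\hat\beta$, and hence $r$, is common to all optimal solutions, so the orthogonality $r \perp \mathcal{R}(s_{*t})$ is a first-order condition holding at every optimum and the conclusion $\hat\beta_j = 0$ holds uniformly. Two points should be stated explicitly for the argument to be airtight: that the penalty parameter is strictly positive (without $\lambda > 0$ a linearly redundant column could be assigned an arbitrary compensating coefficient), and that ``selected'' is identified with having nonzero support. I would also remark that the treatment column $w$ and the intercept play no distinguished role here---only membership in $\mathcal{R}(s_{*t})$ is used---so the same reasoning goes through regardless of whether $w$ lies in that span.
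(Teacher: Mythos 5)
Your proof is correct, but it takes a genuinely different route from the paper's. The paper argues primally, via an exchange (absorption) argument: since $u_j^{t+1} \in \mathcal{R}(s_{*t})$, one writes $u_j^{t+1} = X(s_{*t})v$ and folds the contribution $\beta_j u_j^{t+1}$ into the unpenalized coefficients, producing a feasible point with the identical fit but with penalty reduced by exactly $\lambda|\beta_j|$; hence any candidate solution with $\beta_j \neq 0$ is strictly suboptimal, and no minimizer selects $j$. You instead argue via first-order (KKT) conditions: stationarity in the unpenalized block gives $r \perp \mathcal{R}(s_{*t})$, hence $(u_j^{t+1})^\top r = 0$, while the subgradient condition for a nonzero penalized coefficient forces $|(u_j^{t+1})^\top r| = \lambda/2 > 0$ (with the paper's normalization of the objective; your statement of $\lambda$ differs only by this harmless constant), a contradiction. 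Both proofs hinge on the same two structural facts---the penalty applies only to the newly generated features, and $u_j^{t+1}$ lies in the unpenalized span---so they are close cousins, but the mechanisms differ. The paper's version buys elementarity: no subdifferential calculus, and it covers all minimizers automatically because it shows directly that any point with $\beta_j \neq 0$ can be strictly improved. Your version buys transparency and connection to standard LASSO theory: a penalized variable must ``earn'' its nonzero coefficient through residual correlation of magnitude $\lambda/2$, which a redundant column can never supply. One remark: your appeal to uniqueness of the fitted value is not actually needed. The KKT conditions hold at each optimal solution separately, so the orthogonality of that solution's own residual to $\mathcal{R}(s_{*t})$, combined with the subgradient condition at coordinate $j$ for that same solution, already yields $\hat{\beta}_j = 0$ for every optimum; the uniqueness-of-fit fact is a fine belt-and-suspenders addition but can be dropped.
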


This first proposition implies two things. First, we have a full rank design matrix at each iteration. Second, the subsequent selection will disregard the features that are highly correlated with the existing ones and hence provides intuition for why the post-ReFeX-LASSO leads to estimate with high variance. Without the sequential procedure of (non-post-) ReFeX-LASSO, two highly correlated features may enter the selection stage together. 

\begin{proposition} \label{prop: selection2}
Our selection is nested in the sense that $s_{*1} \subseteq s_{*2} \subseteq \cdots \subseteq s_{*T}$.
\end{proposition}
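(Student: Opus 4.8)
The plan is to read the nestedness directly off the bookkeeping of the target set $S$ in Algorithm~\ref{alg: ReFeX-lasso}, and then to verify that the only two mechanisms by which a feature could leave the selected set — being shrunk out of the LASSO fit, or being discarded in the active-set cleanup — never act on a feature that has already been selected. The combinatorial content is that $S$ only ever grows, so the real work is confirming that the selected set $s_{*t}$ faithfully tracks $S$.

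First I would pin down the identity $s_{*t} = S_t^{+}$, where $S_t^{+}$ denotes the contents of $S$ immediately after the selection in Step~5 of iteration $t$. By the note accompanying the definition of $s_{*t}$, the set of features selected at iteration $t$ consists of the features carried in from earlier iterations (those already in $S$ at the start of iteration $t$) together with whatever new features from the active set $A$ acquire nonzero LASSO coefficients; Step~5 adds exactly these newly selected features to $S$, so the two sets coincide.

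Second, I would establish that $S$ is monotone non-decreasing, i.e.\ $S_t^{+} \subseteq S_{t+1}^{+}$. The only operation the algorithm applies to $S$ is the union in Step~5, and there is no step that removes an element of $S$; the deletion in Step~7 acts on $A$, not on $S$, so an unselected active feature is discarded while a feature that has been promoted to $S$ persists. It then remains to confirm that a feature residing in $S$ continues to count as selected at every later iteration. This is where the design choice of putting no penalty on features in $S$ (Step~4) enters: those features appear in the fit as free, unpenalized covariates and are therefore retained in the model at each subsequent iteration rather than being shrunk to zero, while the full-rank property guaranteed by Proposition~\ref{prop: selection1} ensures their coefficients are well defined throughout. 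Hence the forced-in features are genuinely part of $s_{*t}$ for every $t$ after they are first selected. Chaining $s_{*t} = S_t^{+} \subseteq S_{t+1}^{+} = s_{*(t+1)}$ and iterating over $t$ yields $s_{*1} \subseteq s_{*2} \subseteq \cdots \subseteq s_{*T}$.

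The only genuinely delicate point, and the one I expect to be the main obstacle, is the claim that an unpenalized feature can never be dropped from the LASSO support. Combinatorially the nestedness is immediate from ``$S$ only grows,'' but to make the statement about $s_{*t}$ airtight one must argue at the level of the optimization problem — via the stationarity conditions for the unpenalized block, using the full-rank design from Proposition~\ref{prop: selection1} — that the coefficients of the features in $S$ are pinned down by an ordinary least squares fit of the current residual and so remain in the active set. Everything else reduces to direct inspection of the algorithm.
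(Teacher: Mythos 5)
Your proposal is correct and takes essentially the same route as the paper: the paper's entire proof is the stationarity argument you flag as the crux---it differentiates the iteration-$(t+1)$ objective with respect to the unpenalized block $\beta(s_{*t})$, solves to get $\hat{\beta}(s_{*t}) = [X^T(s_{*t})X(s_{*t})]^{-1}X^T(s_{*t})[y - X(A_{t+1})\beta(A_{t+1})]$, substitutes this back to see that minimizing over the penalized block leaves this profiled OLS solution intact, and concludes nestedness. The one point you gloss over is that being ``pinned down by OLS'' does not by itself keep a feature in the support---one must also observe that these profiled coefficients are almost surely nonzero (which holds for continuous outcomes), and the paper states this explicitly as the final step.
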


This second proposition is relatively self-explanatory and ensures that the sequential procedure actually provides nested feature sets, i.e., by excluding penalties on selected features, we are able to keep them in our feature set $S$. Though our selection procedure in ReFeX-LASSO is quite different from Sequential LASSO \citep{sequentialLasso}, the proofs of the above two propositions are analogous to those in \cite{sequentialLasso}. There are two key differences between our selection procedure and Sequential LASSO. First, instead of keeping all the features for every iteration, we throw away non-selected features in previous iterations. Second, the features under consideration at each iteration are newly generated features rather than existing features. Put another way, we find that the analysis in \cite{sequentialLasso} is robust to such a change in procedure. Note that Sequential LASSO can be used for post-ReFeX-LASSO (but not ReFeX-LASSO) since for post-ReFeX-LASSO we generate all the candidate features in advance.
These two propositions together establish two intuitive properties of our selection step in ReFeX-LASSO that we should expect to hold for our purpose. Their proofs can be found in Appendix~\ref{appendix: proofs}.

\subsection{Consistency} 
We now prove that post-ReFeX-LASSO leads to a consistent estimator of the GATE under standard assumptions one would require for consistency of LASSO. For each unit $i$, we denote the set of features generated by the ReFeX step in post-ReFeX-LASSO as $\{u_i^1, \cdots, u_i^M\}$. We drop the subscript $i$ when we refer to the $j$th feature vector, i.e., $u^j = [u_1^j, \cdots, u_n^j]^T$. Furthermore, we assume that there exists a subset $S_{*} \subset \{u^1, \cdots, u^M\}$ with $|S_{*}| = s$ such that both $f_0$ and $f_1$ are linear in features in $S_{*}$ with coefficient vectors $\beta_0$ and $\beta_1$ respectively. Finally, we denote the design matrix when estimating $\beta_0$ by $U^0$ and the design matrix when estimating $\beta_1$ by $U^1$.
\begin{theorem} \label{thm: consistency}
Suppose that there exists a constant $C > 0$ such that
\begin{equation*}
\max_{j = 1, \cdots, M}\frac{\|u^j\|_2}{\sqrt{n}} \leq C,
\end{equation*}
and the two design matrices $U^0$ and $U^1$ satisfy the $(\kappa; 3)$-RE condition over $S$, then $\hat{\tau}$ is consistent for $\tau$.
\end{theorem}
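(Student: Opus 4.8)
The plan is to reduce consistency of $\hat\tau$ to an $\ell_1$-consistency statement for the fitted coefficient vectors, exploiting that under the stated model the potential-outcome means are \emph{exactly} linear in the features of $S_*$. First I would record the exact algebraic identity for the estimand. Write $\beta_0,\beta_1$ for the true coefficient vectors (zero-padded outside $S_*$, viewed inside the selected set), and let $u_i^{gt}=\hat g(i,\mathbf{1},G)$ and $u_i^{gc}=\hat g(i,\mathbf{0},G)$ be the deterministic feature vectors under global treatment and control. Since $w_i=1$ under $\mathbf{1}$, exogeneity $\mathbb{E}[\epsilon_i\mid w]=0$ together with linear interference gives $\mathbb{E}[Y_i(\mathbf{1})]=\beta_1^\top u_i^{gt}$ and likewise $\mathbb{E}[Y_i(\mathbf{0})]=\beta_0^\top u_i^{gc}$, so that
\[
\tau=\beta_1^\top\bar u^{gt}-\beta_0^\top\bar u^{gc},\qquad \bar u^{gt}=\tfrac1n\textstyle\sum_i u_i^{gt},\ \ \bar u^{gc}=\tfrac1n\textstyle\sum_i u_i^{gc}.
\]
Because $\hat\tau=\hat\beta_1^\top\bar u^{gt}-\hat\beta_0^\top\bar u^{gc}$ by definition, subtracting yields the clean decomposition
\[
\hat\tau-\tau=(\hat\beta_1-\beta_1)^\top\bar u^{gt}-(\hat\beta_0-\beta_0)^\top\bar u^{gc}.
\]

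Second, I would control each term by H\"{o}lder's inequality, $|(\hat\beta_t-\beta_t)^\top\bar u|\le\|\hat\beta_t-\beta_t\|_1\,\|\bar u\|_\infty$, and let the bounded-feature hypothesis do the work on the $\ell_\infty$ factor: for any feature $j$, Cauchy--Schwarz gives $|\bar u_j|\le\|u^j\|_2/\sqrt n\le C$, whence $\|\bar u^{gt}\|_\infty,\|\bar u^{gc}\|_\infty\le C$ once the norm bound is applied to the global-assignment feature matrices. This is the step at which I would flag that the bound must hold for the features evaluated at $\mathbf{1}$ and $\mathbf{0}$, not merely at the observed $w$; for ReFeX features built from bounded base features and bounded aggregations this is immediate, and I would fold it into the hypothesis. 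It then remains only to establish $\|\hat\beta_0-\beta_0\|_1\to_p 0$ and $\|\hat\beta_1-\beta_1\|_1\to_p 0$, since combining the two displays gives $|\hat\tau-\tau|\le C(\|\hat\beta_1-\beta_1\|_1+\|\hat\beta_0-\beta_0\|_1)$.

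Third, I would obtain these $\ell_1$-consistency statements from standard high-dimensional regression theory applied separately on the treated and control subsamples. Because $\hat\beta_0,\hat\beta_1$ are ordinary least squares refits on the LASSO-selected set rather than the LASSO coefficients themselves, I would argue in two moves. (i) Under the $(\kappa;3)$-RE condition on $U^0$ and $U^1$, the bounded-feature condition controls the empirical Gram matrices and, with the noise moment conditions implicit in ``standard assumptions for LASSO,'' controls the score $\|(U^t)^\top\epsilon\|_\infty/n=O_p(\sqrt{\log M/n})$; the usual oracle inequality then gives LASSO $\ell_1$ error $O_p(s\lambda/\kappa^2)\to 0$ with $\lambda\asymp\sqrt{\log M/n}$. (ii) The post-selection least-squares refit inherits the same rate under the same RE condition, so $\|\hat\beta_t-\beta_t\|_1\to_p 0$ for $t=0,1$. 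Here Proposition~\ref{prop: selection1}'s full-rank guarantee is convenient, since it ensures the refit design is well posed.

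The main obstacle is the extrapolation gap: the models are fit on the observed Bernoulli design $w$, yet $\hat\tau$ evaluates them at the far-away assignments $\mathbf{1}$ and $\mathbf{0}$. The decomposition above shows this gap is harmless \emph{precisely because} the model is correctly specified in $S_*$ --- correct specification converts the problem from one of out-of-sample prediction, which LASSO prediction bounds would not control at $\mathbf{1},\mathbf{0}$, into one of coefficient estimation, which the RE condition handles. Consequently the delicate part of the argument is securing coefficient-level ($\ell_1$) consistency of the \emph{refitted} estimator rather than of the LASSO itself; this in turn requires that selection capture $S_*$ (or a superset whose extraneous coordinates are estimated toward zero), and I would lean on the RE condition together with the sparsity of the LASSO solution to make the post-selection least-squares step rigorous.
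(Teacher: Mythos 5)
Your proposal is correct and follows the same route as the paper's own proof: both exploit exact linearity of $f_0,f_1$ in the features of $S_*$ to write $\hat\tau-\tau=(\hat\beta_1-\beta_1)^\top\bar u^{gt}-(\hat\beta_0-\beta_0)^\top\bar u^{gc}$, thereby converting extrapolation to $\mathbf{1},\mathbf{0}$ into a pure coefficient-estimation problem, and then invoke standard LASSO error bounds under the $(\kappa;3)$-RE condition (plus a law of large numbers for the treated/control subsample sizes). Two execution differences are worth recording. First, where you pair $\|\hat\beta_t-\beta_t\|_1$ with $\|\bar u\|_\infty\le C$ via H\"older, the paper pairs $\|\hat\beta_t-\beta_t\|_2$ with $\frac{1}{n}\sum_i\|u_i\|_2\le C\sqrt{M}$ via Cauchy--Schwarz; your pairing eliminates the $\sqrt{M}$ factor, so the resulting bound degrades only through $\log M$ (via the LASSO $\ell_1$ rate $s\sqrt{\log M/n}$), whereas the paper's bound needs the $\ell_2$ error to beat $1/\sqrt{M}$ --- a genuine improvement when ReFeX generates many candidate features, though immaterial for fixed $M$. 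Second, you distinguish the LASSO coefficients from the post-selection OLS refit that the estimation section actually prescribes; the paper elides this and applies the Lagrangian-LASSO $\ell_2$ bound directly to $\hat\beta_0,\hat\beta_1$, so your extra step (ii), which needs a Belloni--Chernozhukov-type post-LASSO argument and is asserted rather than proved in your sketch, is a refinement of, not a deviation from, the paper's logic. Similarly, your observation that the column-norm bound must hold for the feature matrices evaluated at $\mathbf{1}$ and $\mathbf{0}$, not merely at the observed $w$, is a hypothesis the paper uses silently in bounding $\|u_i^{gt}\|_2$ and $\|u_i^{gc}\|_2$. None of these differences affects the correctness of the consistency claim.
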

A proof of Theorem~\ref{thm: consistency} appears in Appendix~\ref{appendix: proofs} and uses mostly standard tools for the study of LASSO $\ell_2$-error bounds \citep{wainwright_2019}. The restricted eigenvalue (RE) condition in Theorem~\ref{thm: consistency} is a standard assumption when proving $\ell_2$-error bound on the coefficient vector. It restricts the curvature for a specific subset of vectors in the Euclidean space. It is defined as follows \citep{Bickel2009, vanderGeer09, JMLR:v11:raskutti10a}
\begin{definition}
The matrix $\mathbf{X}$ satisfies the restricted eigenvalue (RE) condition over $S$ with parameters $(\kappa; \alpha)$ if
\begin{equation*}
    \frac{1}{n}\|\mathbf{X}\Delta\|_2^2 \geq \kappa\|\Delta\|_2^2 \qquad \text{for all } \Delta \in \mathbb{C}_\alpha(S),
\end{equation*}
where $\mathbb{C}_\alpha(S) \coloneqq \{\Delta \in \mathbb{R}^d \, | \, \|\Delta_{S^c}\|_1 \leq \alpha \|\Delta_S\|_1\}$.
\end{definition}

Under the assumptions of Theorem~\ref{thm: consistency}, we are now able to prove GATE consistency under LASSO-based feature selection in at least simple settings such as the following, an example setting where our feature generation procedure outputs two simple features.

\begin{proposition} \label{prop: consistency-example}
Suppose we run a Bernoulli randomized experiment with treatment probability $0 < p < 1$ and we only generate two features, the fraction of treated neighbors $\rho_i$ and number of treated neighbors $\nu_i$. Furthermore, suppose the graph $G$ consists of disjoint cliques of size $3 \leq m_c \leq M$ ($m_c$ is the size of the $c$-th cluster) for some positive constant $M \geq 3$. If the true $f_0$ and $f_1$ are only linear in $\rho_i$, then $\hat{\tau}$ is consistent for $\tau$.
\end{proposition}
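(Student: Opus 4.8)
The plan is to verify the two hypotheses of Theorem~\ref{thm: consistency} in this concrete setting and then invoke it. Here the ReFeX step outputs the two features $\rho$ and $\nu$, so $M = 2$, and since $f_0$ and $f_1$ depend only on $\rho$ the relevant support is $S_* = \{\rho\}$ with $s = 1$ (together with an intercept, if one is included). It therefore suffices to (i) bound the rescaled feature norms uniformly in $n$ and (ii) establish the $(\kappa;3)$-RE condition over $S$ for the control and treated design matrices $U^0$ and $U^1$.

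The norm bound is immediate. Because $\rho_i = \frac{1}{d_i}\sum_{j \in \mathcal{N}_i} w_j \in [0,1]$ we have $\|\rho\|_2/\sqrt{n} \le 1$, and because every node lies in a clique of size at most $M$ its degree obeys $d_i = m_{c(i)} - 1 \le M-1$, giving $\nu_i = \sum_{j \in \mathcal{N}_i} w_j \le M-1$ and $\|\nu\|_2/\sqrt{n} \le M-1$. So hypothesis (i) holds with $C = M-1$.

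The RE condition is the crux. The key structural fact is that inside any single clique the two features are exactly proportional, $\nu_i = (m_{c(i)}-1)\,\rho_i$, so the only source of non-degeneracy of the $\rho$--$\nu$ Gram matrix is heterogeneity of the clique sizes. I would proceed in two steps. First, since the cliques are disjoint and the $w_j$ are i.i.d.\ Bernoulli, the per-clique outer products $\sum_{i \in c}(\rho_i,\nu_i)^T(\rho_i,\nu_i)$ are independent across cliques and uniformly bounded (there are at least $n/M$ cliques); a concentration inequality for sums of independent bounded matrices then gives that $\frac{1}{n}(U^t)^TU^t$ converges, with probability tending to one, to its expectation $\bar G$. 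Second, I would write $\bar G$ as a convex combination over sizes $m$ of rank-one matrices whose ranges are spanned by the directions $(1,m-1)$; distinct sizes give linearly independent directions, so $\bar G$ is positive definite with $\lambda_{\min}(\bar G)$ bounded below by a constant whenever a fraction of nodes bounded away from zero sits in cliques of at least two distinct sizes. Positive definiteness of the limit yields $\frac{1}{n}\|U^t\Delta\|_2^2 \ge \kappa\|\Delta\|_2^2$ for all $\Delta$, hence on the cone $\mathbb{C}_3(S)$, so the $(\kappa;3)$-RE condition holds with high probability and Theorem~\ref{thm: consistency} applies.

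The main obstacle is exactly this within-clique collinearity. If every clique shares one common size then $\nu = (m-1)\rho$ globally, $\bar G$ is rank one, and RE genuinely fails: the null direction $(-(m-1),1)$ lies inside $\mathbb{C}_3(S)$ for $m \ge 3$, so the theorem cannot be applied verbatim. I would dispatch this boundary case directly. Even when the separate coefficients on $\rho$ and $\nu$ are unidentified, the global-control vector $u_i^{gc}$ is zero and the global-treatment vector $u_i^{gt} = (1,m-1)$ is proportional to every observed training feature vector $(\rho_i,\nu_i) = \rho_i(1,m-1)$; hence $\hat\beta_t^T u_i^{gc}$ and $\hat\beta_t^T u_i^{gt}$ depend only on identified linear combinations of the fitted coefficients, whose estimates are consistent for the corresponding combinations of $\beta_0$ and $\beta_1$. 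This keeps $\hat\tau$ consistent in the degenerate case as well, closing the remaining gap.
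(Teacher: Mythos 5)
Your proposal is correct and shares the paper's overall skeleton: both proofs reduce the claim to checking the hypotheses of Theorem~\ref{thm: consistency}, with essentially all of the work going into the $(\kappa;3)$-RE condition for the clique-structured design (the column-norm bound being trivial, as you note). The execution differs in instructive ways. The paper verifies RE by an entrywise weak law of large numbers: it computes limits of the centered control-group Gram entries in terms of $\frac{1}{n}\sum_i \frac{1}{m_i-1}$, $\frac{1}{n}\sum_i(m_i-1)$ and $\frac{1}{n}\sum_i(m_i-1)^2$ (bounded clique sizes make the covariance sums $O(n)$, exactly your independence-across-cliques observation), handles the random number of control units via $n/n_0 \xrightarrow{\mathbb{P}} 1/(1-p)$, and exhibits an explicit $\kappa$. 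You instead argue structurally: concentration of the Gram matrix plus its decomposition into rank-one blocks proportional to $(1,m-1)^T(1,m-1)$, with positive definiteness coming from linear independence of these directions across distinct clique sizes. Your route makes visible something the paper's computation obscures: since $\nu_i = (m_{c(i)}-1)\rho_i$ exactly within each clique, RE can only hold when the clique sizes are heterogeneous, and it genuinely fails when all $m_c$ are equal (your null direction $(-(m-1),1)$ indeed lies in $\mathbb{C}_3(S)$). The paper dismisses this degenerate case only informally in the text following the proposition (``when all $m_c$'s are equal, we end up with perfect co-linearity so in that case we wouldn't consider distinguishing between these two features''), while its formal proof does not treat it; your final paragraph patches it by observing that the global-control feature vector vanishes and $u^{gt}=(1,m-1)$ lies in the row space of the observed design, so the predictions entering $\hat\tau$ are identified functionals even though the individual coefficients are not. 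That patch is a genuine improvement in rigor over the paper's own argument. Two caveats: your concentration step should be run on the centered (intercept-adjusted) treated and control Gram matrices with random $n_0$, $n_1$, as the paper does; and both your proof and the paper's leave open the intermediate regime where sizes are unequal but the heterogeneity vanishes asymptotically (say, one clique of size $4$ among $\Theta(n)$ cliques of size $3$), in which $\lambda_{\min}(\bar G)\to 0$, so neither a fixed-$\kappa$ RE bound nor your exact-collinearity argument applies verbatim --- a gap you at least make explicit by stating the ``fraction bounded away from zero in at least two distinct sizes'' condition, which the proposition as stated does not guarantee.
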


The lower bound on $m_c$ is for identifiability since when all clusters have size 2 then $\rho_i$ and $\nu_i$ are essentially the same and we end up with completely duplicated features. Notice also that when all $m_c$'s are equal, we end up with perfect co-linearity so in that case we wouldn't consider distinguishing between these two features. While the above result applies only in a simple setting, it is of its own importance. In practice, it is not uncommon to adjust for fraction of treated neighbors and report the resulting estimate as the estimate of the GATE \citep{ego2019, karrer2021}. The above proposition shows that when we only want to distinguish covariates between fraction of treated neighbors and number of treated neighbors, LASSO is a handy tool.

\subsection{Confidence interval via a block bootstrap} \label{subsec: block-bootstrap}
Researchers are usually not just interested in a point estimate of the GATE, they also want to know the uncertainty contained in the estimate, e.g., through confidence intervals. ReFeX-LASSO brings flexibility in doing regression adjustment for GATE estimation, but there is no free lunch and it also brings us
difficulty in doing inference, i.e., in constructing confidence interval for $\tau$. First, unlike \cite{Chin+2019} where one assumes an oracle model, here the true model is unknown. Second, features constructed in \cite{Chin+2019} do not use the observed outcomes. With ReFeX-LASSO, though all the features constructed from ReFeX do not use the outcomes, our selections of covariates \textit{depend} on the realized outcomes.
Therefore, ReFeX-LASSO leads to an estimator with no clear variance expression. Moreover, since our final estimate depends on the actual selected covariates, we require some technique analogous to post-selection inference as in \cite{10.1214/15-AOS1371}. \cite{10.1214/15-AOS1371} consider confidence intervals of coefficients conditional on being selected by LASSO. Yet we are interested in the confidence interval of $\tau$, not the coefficients, where our estimate $\hat \tau$ is calculated based on the estimated coefficients as well as selected covariates. Because of the combination of these complexities, we are not able to simply import any known results for inference in this setting.

Let us consider the nature of the inference problem we are facing. In general, the randomness of our estimate is incurred not just by the randomness of the potential outcomes but also by the randomness of the assignment vector. To construct the confidence interval, we need to quantify how these two resources of randomness affect our estimate of the GATE. Note that since we know the distribution of the assignment vector, the distribution of a given feature is in fact known. What we don't have a good characterization of is the randomness of the selection procedure incurred by the randomness of the assignment vector. In other words, we require understanding how the random assignments affect the feature selection procedure.

To tackle this complication, we introduce a way to construct confidence intervals based on a block bootstrap. Ideally if we can do the experiment infinitely many times, we could run $2^n$ experiments and calculate $2^n$ estimates of the GATE. A confidence interval for $\tau$ could then be derived easily. Our obvious difficulty is then how should we use one single sample to approximate the sample randomness. We turn to the block bootstrap \citep{bootstrap, efron1994introduction, cameron2008}. The intuition of this usage is that features of units are correlated according to the particular graph structure of $G$ and hence by sampling clusters (which we expect to be relatively disconnected) we are able to keep the bootstrap sample looking like the original sample. On the other hand, resampling units will fail as it cannot replicate the underlying correlation structure in the data. Though we do not provide theoretical guarantees, we will show that in practice the coverage is good and the resulting confidence intervals are of reasonable width. We also note in passing that recent results in \cite{networkbootstrap} demonstrate that there is a version of block bootstrap that does provide theoretical guarantee for certain highlu stylized network processes.


\begin{example} \label{ex: blockbootstrap}
Consider the case where our social network $G$ consists of $C$ disjoint cliques $\mathcal{C}_1, \cdots, \mathcal{C}_C$ of size $m$. Units are fully connected within each clique. This setup can be viewed as a special case of the household experiment studied in \cite{basse2018}. In such a case it is natural to consider sampling all $C$ cliques with replacement to get a bootstrap sample. For network dependent processes satisfying certain technical assumptions, this sampling process is the correct thing to do using arguments in \cite{networkbootstrap}. Suppose we have a network dependent process $\{Y_n, G_n\}$ that satisfies assumptions in \cite{networkbootstrap}. To make block bootstrap consistent, i.e., producing a confidence interval that is consistent in level, Assumption 4.1 in \cite{networkbootstrap} needs to hold.
Tersely employing the notation of that assumption, it is easy to verify that in our case, $\delta_n(s_n) = m$, $\Delta_n(s_n, 2) = 0$, and $D_n(s_n) = m$ for $\forall s_n \geq \max_{c}\text{diam}(\mathcal{C}_c)$, since our graph consists of non-overlapping blocks with equal size $m$. Moreover,
\begin{equation*}
    \omega_n(i, j) = \begin{cases}
    1 \quad \text{if $i$ and $j$ are in the same cluster}, \\
    0 \quad \text{otherwise}.
    \end{cases}
\end{equation*}
and $\omega_n(j) = 1$ for all $j \in [n]$. With these values, we immediately see that the Assumption 4.1 in \cite{networkbootstrap} holds as long as $m = o(n)$. Since the only remaining assumptions needed to make block bootstrap consistent are about the network dependent process itself, we can conclude that block bootstrap would be valid in this toy model for network dependent processes given in \cite{networkbootstrap}.
\end{example}

We present two versions of block bootstrap here, one for regression adjustment with post-ReFeX-LASSO and one for regression adjustment with ReFeX-LASSO. Before actually giving the two block bootstrap procedures, we first introduce the key ingredient in our block bootstrap procedure, a randomized graph clustering algorithm. Our block bootstrap procedure involves partition the graph into several clusters. The generic algorithm we use is $k$-hop-max clustering \citep{uganderyin2020}, a simple adaptation of the CKR partitioning algorithm \citep{calinescu2005approximation}. The details are shown in Algorithm~\ref{Alg:1_hop_max}. The algorithm provides a random clustering of the graph that depends on random initial conditions. The algorithm is light in computation when $k = 1$ as we only need to look at one's direct neighbors. Also, it returns neighborhood-like clusters. As a remark connecting back to above example, if our graph consists of disjoint fully connected clusters then 1-hop max clustering is able to return exactly these clusters as final output. In general, when $k > 1$, we obtain larger clusters that are centered around fewer nodes. 

\begin{algorithm}[ht]
\caption{$k$-hop-max graph clustering}
\begin{algorithmic}[1]
\Require Graph $G = (V, E)$.
    \Ensure Graph clustering $\mathcal{C}_1, \cdots, \mathcal{C}_c$.
    \For{$i \in V$}
     \State $X_i \leftarrow \mathcal U(0, 1)$;
    \EndFor
    \For{$i \in V$}
     \State $i \leftarrow \text{argmax}([X_j \text{ for } j \in B_k(i)])$;
    \EndFor
    \State Return $\mathcal{C}_1, \cdots, \mathcal{C}_c$.
\end{algorithmic}
\label{Alg:1_hop_max}
\end{algorithm}

We first present the block bootstrap procedure for post-ReFeX-LASSO, given in Algorithm~\ref{alg: post-ReFeX-lasso-boot}. With post-ReFeX-LASSO, the bootstrap procedure is simpler since the feature generation and selection part are separated. Unlike the usual bootstrap where we sample random individual units with replacement, here we sample random clusters from the graph clustering algorithm with replacement. The intuition is that features $u_i$ of units are correlated according to the particular graph structure of $G$ and hence by sampling clusters, which we expect to be relatively disconnected, we are able to keep the bootstrap sample ``looking like'' the original sample. As a specific caveat, though in expectation the bootstrap sample has sample size $n$, if we do not have uniformly sized clusters, then the bootstrap sample may end up with much larger or smaller sample size. Hence we run the graph clustering algorithm $l$ times and for each clustering we run block bootstrap with the number of bootstrap replicates $B$. We use $k = T^* + 1$ for $k$-hop-max clustering in Algorithm~\ref{alg: post-ReFeX-lasso-boot} where $T^*$ is the number of iteration where there were features still got selected (since if no feature got selected in the $(T^* + 1)$-th iteration then interference should happen within $(T^* + 1)$-hop neighborhood).

\begin{algorithm}
\caption{Block bootstrap for post-ReFeX-LASSO} \label{alg: post-ReFeX-lasso-boot}
\begin{algorithmic}[1]
\Require Graph $G = (V, \mathcal{E})$, assignment vector $w \in \{0, 1\}^n$, number of bootstrap samples $B$.
\Ensure Confidence interval for $\tau$.
\State Collect the assignment $w_i$, features $u_i^1, \cdots, u_i^M$ in $S$ generated before running LASSO, outcome $y_i$ for each unit $i$. Record the maximum iteration number $T^*$ where one of the features generated at that iteration was selected.
\State Use $k$-hop max clustering algorithm with $k = T^* + 1$ to divide $n$ units into $C$ clusters $\mathcal{C}_1, \cdots, \mathcal{C}_C$.
\For{$b = 1$ to $B$}
\State Sample $C$ clusters with replacement from $\mathcal{C}_1, \cdots, \mathcal{C}_C$.
\State Construct the $b$-th bootstrap sample $(w^b, u^{1, b}, \cdots, u^{M, b}, y^b)$ with units from sampled clusters.
\State Regress $y$ on $w$ as well as $M$ features using LASSO.
\State Compute the estimate $\hat{\tau}^b$ using selected features and the bootstrap sample.
\EndFor 
\State Repeat line 2-8 for $\ell$ times and obtain $\ell\cdot B$ bootstrap estimates in total.
\State Compute the $\alpha/2$-th quantile $q_{\alpha/2}^{*}$ and the $(1 - \alpha/2)$-th quantile $q_{1 - \alpha/2}^{*}$ of the sample of all bootstrap estimates $\hat{\tau}^1, \cdots, \hat{\tau}^{\ell B}$.
\State Return $\left[q_{\alpha/2}^{*}, q_{1 - \alpha/2}^{*}\right]$ as the $(1-\alpha) \times 100\%$ confidence interval for $\tau$.
\end{algorithmic}
\end{algorithm}

Next we present the version of block bootstrap with ReFeX-LASSO, given in Algorithm~\ref{alg: ReFeX-lasso-boot}. Note that we cannot simply use the same algorithm since it performs feature generation and feature selection concurrently. Compared to Algorithm~\ref{alg: post-ReFeX-lasso-boot}, $T^*$ now represents the stopping time of ReFeX-LASSO. Meanwhile, similar to Algorithm~\ref{alg: post-ReFeX-lasso-boot}, the bootstrap sample is only used in the feature selection step of ReFeX-LASSO. That being said, for each iteration, we still use the same graph $G$ to generate features but then we use the bootstrap sample of these features to do selection. The intuition behind using the original graph is that we view the graph as fixed and the correlation structure of all features are then induced by this graph. Therefore, we do not paste all sampled clusters together to form a new graph to generate features for next iteration. On the other hand, if we do believe that the graph is generated from some random process then we may also reconstruct the graph from sampled units by pasting all sampled clusters together.

\begin{algorithm}[ht]
\caption{Block bootstrap for ReFeX-LASSO} \label{alg: ReFeX-lasso-boot}
\begin{algorithmic}[1]
\Require Graph $G = (V, \mathcal{E})$, assignment vector $w \in \{0, 1\}^n$, number of bootstrap samples $B$.
\Ensure Confidence interval for $\tau$.
\State Collect the assignment $w_i$ and 
outcome $y_i$ for each unit $i$. Record the stopping time for ReFeX-LASSO $T^*$.
\State Use $k$-hop max clustering with $k = T^* + 1$ to divide $n$ units into $C$ clusters $\mathcal{C}_1, \cdots, \mathcal{C}_C$.
\For{$b = 1$ to $B$}
\State Sample $C$ clusters with replacement from $\mathcal{C}_1, \cdots, \mathcal{C}_C$.
\State Construct the $b$-th bootstrap sample with units from sampled clusters.
\State Rerun ReFeX-LASSO with the original sample for feature generation and the bootstrap sample for feature selection. 
\State Use the covariates returned from last step as well as the bootstrap sample to get estimate of $\tau$, $\hat{\tau}^b$.
\EndFor 
\State Repeat line 2-8 for $\ell$ times and obtain $\ell\cdot B$ bootstrap estimates in total.
\State Compute the $\alpha/2$-th quantile $q_{\alpha/2}^{*}$ and the $(1 - \alpha/2)$-th quantile $q_{1 - \alpha/2}^{*}$ of the sample of all bootstrap estimates $\hat{\tau}^1, \cdots, \hat{\tau}^{\ell B}$.
\State Return $\left[q_{\alpha/2}^{*}, q_{1 - \alpha/2}^{*}\right]$ as the $(1-\alpha) \times 100\%$ confidence interval for $\tau$.
\end{algorithmic}
\end{algorithm}

In the above two algorithms, we utilize a randomized graph clustering algorithm that can be easily implemented. Of course, this is not the only possible choice for the graph clustering algorithm one can use. We note by passing that there are many graph clustering algorithms available for practitioners \citep{nishimuraugander13, spielman13, ameljohan20, shi2020} that exhibit various properties.

We conclude this section with a discussion of how to suitably choose the sizes of clusters. We consider three scenarios and show why they may fail with heuristics from \cite{networkbootstrap}. Though we are not considering the same problem as in \cite{networkbootstrap}, given that we have a more complicated setup, we do not expect that weaker assumptions than those in \cite{networkbootstrap} would be sufficient for good coverage in our case. Therefore, we view assumptions in \cite{networkbootstrap} as what we should expect to have in order to make our block bootstrap consistent. 

The first scenario that we consider is when we have $O(n)$ clusters with non-constant sizes. Then the second absolute central moment of block sizes may be non-vanishing as $n \rightarrow \infty$ but the average block size is $O(1)$. This implies that unless the clusters are relatively uniform, there would be a violation to Assumption 4.1 in \cite{networkbootstrap}. As a second scenario, consider the case when we have $O(1)$ clusters. Now the maximum block size must be of order $O(n)$ and the average block size is at most $O(n)$, hence Assumption 4.1 in \cite{networkbootstrap} is certainly violated. In general, we don't want to have too many clusters or too few clusters. Finally, then, consider a scenario where we have $\sqrt{n} - 1$ clusters of size $\sqrt{n}$ and $\sqrt{n}$ clusters of size 1. Now the average block size is of order $O(n^{1/2})$ and the second absolute central moment of block sizes is not of lower order, which implies that the ratio does not vanish as $n \rightarrow 0$ and again Assumption 4.1 in \cite{networkbootstrap} is violated. This last example shows that the cluster sizes are not simply a matter of avoiding too big/small or few/many clusters, but instead here we see we cannot have two groups of clusters with different size magnitudes. In summary, the advice is to use a reasonable number of clusters that have sizes of roughly the same magnitude. What we present in Algorithm~\ref{alg: post-ReFeX-lasso-boot} and \ref{alg: ReFeX-lasso-boot} are good default choices if the network is not very dense.

\section{Simulation experiments}
In this section, we use simulations to provide both empirical guidance on our method when theory is lacking and empirical evidence of the usefulness of our method. We make use of the Facebook 100 dataset \citep{TRAUD20124165} of real-world social networks. The networks in this dataset are complete online friendship networks for one hundred colleges and universities collected from a single-day snapshot of Facebook in September 2005. For our simulations we use the network of Swarthmore college students, being of modest size. We extract the largest connected components of the Swarthmore network, obtaining a social network with 1,657 nodes and 61,049 edges. The diameter of the network is 6 and the average pairwise distance is 2.32. Since this network is quite dense, estimation of the GATE would be very difficult when interference is strong. We use this network to demonstrate that even for such a network, we are still able to get relatively good estimates from (post-) ReFeX-LASSO.

We generate an assignment vector using a Bernoulli design with success probability $0.5$ and generate outcome variables according to certain models with varying magnitude of network interference; these models are summarized in Table \ref{table:outcome-model-linear} and \ref{table:outcome-model-truncated}. We will discuss in detail about these outcome models in Section~\ref{subsec:outcome-models}. Our simulations can be viewed as semi-synthetic experiments---we use a true social network but we generate outcomes according to specified models.

Section~\ref{subsec: sim-estimation-gate} introduces the baseline estimators that we compare with in our simulations. Section~\ref{subsec:outcome-models} discusses the outcome models that we use for generating the outcomes with various degree of interference. Section~\ref{subsec: estimation-sim} compares the regression adjustment estimator using model-free covariates with those commonly-used estimators in practice as in Section~\ref{subsec: sim-estimation-gate} and demonstrate that it has good performance in terms of root mean squared error. Section~\ref{subsec: sim-ci-gate} explores the empirical performance of the confidence interval constructed via block bootstrap and discusses some practical aspects in the procedure.

\subsection{Estimation of the GATE} \label{subsec: sim-estimation-gate}
Our ultimate goal of constructing model-free covariates is to use them in GATE estimation. We first explore the empirical performance of the regression adjustment estimator using model-free covariates. Specifically, we compare it with two kinds of estimators that are commonly used in practice: (i) the difference-in-mean estimator and (ii) a H\'{a}jek estimator under a network exposure model \citep{manski2013}. Difference-in-mean estimator calculate the difference between average outcome among treated units and average outcome among control units:
\begin{equation*}
    \hat{\tau}^{DM} = \frac{1}{\sum_{i=1}^nW_i}\sum_{i = 1}^n Y_iW_i - \frac{1}{\sum_{i=1}^n(1-W_i)}\sum_{i = 1}^n Y_i(1-W_i). 
\end{equation*}
Obviously this estimator ignores interference and will thus incur large bias when interference is significant. 

The basic H\'{a}jek estimator for the ATE is defined as
\begin{equation*}
    \hat{\tau}^{\text{H\'{a}jek}} = \frac{\sum_{i = 1}^nY_iW_i/\mathbb{P}(W_i = 1)}{\sum_{i = 1}^n\mathbb{I}(W_i=1)/\mathbb{P}(W_i = 1)} - \frac{\sum_{i = 1}^nY_i(1-W_i)/\mathbb{P}(W_i = 0)}{\sum_{i = 1}^n\mathbb{I}(W_i=0)/\mathbb{P}(W_i = 0)}.
\end{equation*}
Here we will consider a version of H\'{a}jek estimator that accounts for interference. \cite{manski2013} studies identification of potential outcome distributions under interference. One concrete example is when one's outcome only depends on one's own assignment as well as the distribution of assignments for his/her neighbors. \cite{UganderKBK13} further considers a fractional exposure model where it is assumed that if one is treated and a $q > 0.5$ fraction of one's neighbors are treated then one's outcome is equal to the potential outcome associated with the assignment vector $\mathbf{1}$. Similarly, in this exposure model if one is not treated and one's fraction of treated neighbors is at most $1-q$ then one's outcome is equal to the potential outcome associated with the assignment vector $\mathbf{0}$. Formally, $\forall w, w' \in \{0, 1\}^n$, this fractional exposure model assumes: 
\begin{equation*}
  w_i = 1, \frac{1}{|\mathcal{N}_i|}\sum_{j \in \mathcal{N}_i}w_j \geq q \implies Y_i(w) = Y_i(\mathbf{1}),
\end{equation*}
and
\begin{equation*}
  w_i = 0, \frac{1}{|\mathcal{N}_i|}\sum_{j \in \mathcal{N}_i}w_j \leq 1 - q \implies Y_i(w) = Y_i(\mathbf{0}).
\end{equation*}
We can then use a H\'{a}jek estimator that corrects for the probability that these conditions are met under a Bernoulli design. Specifically, we define the events $E_i^{1, q}=\{W_i = 1, \frac{1}{|\mathcal{N}_i|}\sum_{j \in \mathcal{N}_i}w_j \geq q\}$ and  $E_i^{0, 1-q}=\{W_i = 0, \frac{1}{|\mathcal{N}_i|}\sum_{j \in \mathcal{N}_i}w_j \leq 1-q\}$. The corresponding H\'{a}jek estimator under a fractional exposure model is then
\begin{equation} 
    \hat{\tau}^{\text{H\'{a}jek}}_{q, 1-q} = \frac{\sum_{i = 1}^nY_i\mathbb{I}(E_i^{1, q})/\mathbb{P}(E_i^{1, q})}{\sum_{i = 1}^n\mathbb{I}(E_i^{1, q})/\mathbb{P}(E_i^{1, q})} - \frac{\sum_{i = 1}^nY_i\mathbb{I}(E_i^{0, 1-q})/\mathbb{P}(E_i^{0, 1-q})}{\sum_{i = 1}^n\mathbb{I}(E_i^{0, 1-q})/\mathbb{P}(E_i^{0, 1-q})}. \label{sim-est: hajek}
\end{equation}

This estimator accounts for interference by taking the assignments of direct neighbors into consideration. If we still assume local interference in the sense that only one's direct neighbors can impact one's response but want a fully agnostic setting then we could choose $q = 1$ (notice that in this case the H\'{a}jek estimator is consistent). In our case, the number of neighbors one has is usually quite large and under independent Bernoulli assignment we wouldn't expect to observe many units with all neighbors being treated or not treated. As a bias-variance compromise, we choose $q = 0.8$. 

Finally, we also compare our (post-) ReFeX-LASSO regression adjustment estimator with two linear regression adjustment estimators that adjust for specific features. We will describe these two estimators in detail later when we present the simulation results in Section~\ref{subsec: estimation-sim}. For post-ReFeX-LASSO and ReFeX-LASSO, we choose $T = 2$ and the base features to be fraction of treated neighbors, number of treated neighbors, fraction of edges in neighborhood that connects a treated unit and a control unit and also fraction of edges in neighborhood that connects a treated unit and a treated unit. For aggregation functions in (post-) ReFeX-LASSO, we use both the mean and variance.

\subsection{Outcome models} \label{subsec:outcome-models}
Here we describing the outcome models we use in our simulation study. We carry forward the notation from as in Proposition~\ref{prop: consistency-example}, using $\rho_i$ to denote the fraction of treated direct neighbors for unit $i$ and $\nu_i$ to denote number of treated direct neighbors. 

We first consider estimation under linear interference. The first model is a linear model in both number of treated neighbors and fraction of treated neighbors. Such model is also considered in \cite{pouget-abadie19} and \cite{Chin+2019}. Specifically,
\begin{equation}
    f_0(w, G) = \alpha_0 + \xi_0\rho_i + \gamma_0\nu_i \label{sim_outcome_model: simple0}
\end{equation}
and
\begin{equation}
    f_1(w, G) = \alpha_1 + \xi_1\rho_i + \gamma_1\nu_i. \label{sim_outcome_model: simple1}
\end{equation}
The difference $\alpha_1 - \alpha_0$ can be viewed as the primary effect of the treatment and coefficients $(\xi_w, \gamma_w)$ for $w = 0, 1$ govern how the unit respond to treatment and control, respectively. In particular, if $\xi_w = \gamma_w = 0$ then there is no interference and we are back to usual setup of ATE estimation under SUTVA. Note that for this model, there is no interference beyond the 1-hop neighborhood and hence the estimation problem is considerably easier. We will refer to this response model as simple linear interference.

Building on the discussion of the linear-in-means model in the introduction, we also consider a response model where the interference propagates out to $k$-hop neighborhoods for $k \geq 2$. This model can be viewed as a truncated linear-in-means model; instead of summing up to infinity, we truncate the model at $j = J$ for some number $J > 1$. 

\begin{table*}[ht]
\centering
\scalebox{1.0} {
\begin{tabular}{c l c c} 
\toprule 
Model type & $(\alpha_0, \alpha_1)$ & $(\xi_0, \xi_1)$ & $(\gamma_0, \gamma_1)$\\
\midrule
    \multirow{1}{*}{Model 0} 
            & (0, 2) & (0, 0) & (0, 0) \\
            \midrule
    \multirow{1}{*}{Model 1} 
            & (0, 2) & (1, 1.5) & (0.005, 0.0025) \\
            \midrule
    \multirow{1}{*}{Model 2} 
            & (0, 2) & (1, 2) & (0.005, 0.01) \\
\bottomrule 
\end{tabular}}
\caption{Parameters of simple linear interference outcome model (\eqref{sim_outcome_model: simple0} and \eqref{sim_outcome_model: simple1}) used in simulation experiments.} 
\label{table:outcome-model-linear} 
\end{table*}

\begin{table*}[ht]
\centering
\scalebox{1.0} {
\begin{tabular}{c l c c c} 
\toprule 
Model type & $\alpha$ & $\beta$ & $\gamma$ & $J$\\
\midrule
    \multirow{1}{*}{Model 3} 
            & 1 & 5 & 2 & 2\\
            \midrule
    \multirow{1}{*}{Model 4} 
            & 1 & 5 & 3 & 2\\
            \midrule
    \multirow{1}{*}{Model 5} 
            & 1 & 5 & 1 & 3\\
            \midrule
    \multirow{1}{*}{Model 6} 
            & 1 & 5 & 2 & 3\\
\bottomrule 
\end{tabular}}
\caption{Parameters of truncated linear-in-means outcome model used in simulation experiments.} 
\label{table:outcome-model-truncated} 
\end{table*}

Overall we consider the following model configurations of linear interference. Table \ref{table:outcome-model-linear} and \ref{table:outcome-model-truncated} summarize the configurations of the models we consider for simulations. Note that model 0 exhibits no interference. For all models, the error terms are independently normally distributed with variance 1. The true GATE in these outcome models (either by an exact calculation or by a Monte Carlo estimate on the Swarthmore network) are 2, 3.69, 4.74, 15, 20, 15 and 35 respectively.

Beyond linear interference, we also examine a slightly more complicated scenario where linear interference is violated. In particular, we consider $f_0$ and $f_1$ that are nonlinear in $\rho_i$ and $\nu_i$. The nonlinear functions we use are sigmoid-type so that it is hard to approximate by any linear model\footnote{We document this model in the Appendix~\ref{appx: supp}.}. We use the Monte Carlo estimate, 9.55, as the true GATE when reporting the simulation results. Our purpose here is to show that even if we have nonlinear $f_0$ and $f_1$ which violates our linear interference assumption, our method still leads to an estimator with reasonable performance. This also echos our previous discussion. In GATE estimation, we are always predicting for a data point that is outside the range of our observed/training data and hence a simple model can be quite reliable.

\begin{table*}[ht]
\centering
\scalebox{0.9} {
\begin{tabular}{c l c c c c c c c c} 
\toprule 
Estimator & $\hat{\tau}^{\text{DM}}$ & $\hat{\tau}^{\text{H\'ajek}}_{\text{0.8,0.2}}$ & $\hat{\tau}_{\text{frac}}$ & $\hat{\tau}_{\text{num}}$ & post-ReFeX-LASSO & ReFeX-LASSO & $\hat{\tau}_{\text{oracle}}$\\
\midrule
    \multirow{1}{*}{Model 0} 
            &  0.05  & 0.76 & 0.24 & 0.07 & 0.50 & 0.32 & 0.05\\
            \midrule
    \multirow{1}{*}{Model 1} 
            &  1.53  & 1.02 & 0.36 & 1.22 & 1.54 & 0.70 & 0.25\\
            \midrule
    \multirow{1}{*}{Model 2} 
            &  2.06  & 1.41 & 0.47 & 1.49 & 1.49 & 0.59 & 0.24\\
            \midrule
    \multirow{1}{*}{Model 3} 
            &  10.02 & 3.84 & 0.37 & 9.86 & 1.08 & 0.93 & 0.37\\
            \midrule
    \multirow{1}{*}{Model 4} 
            &  15.02  & 5.60 & 0.56 & 14.72 & 1.68 & 1.59 & 0.56\\
            \midrule
    \multirow{1}{*}{Model 5} 
            &  9.92  & 6.61 & 4.53 & 9.82 & 1.38 & 1.73 & 1.98\\
            \midrule
    \multirow{1}{*}{Model 6} 
            & 29.67  & 22.31 & 18.22 & 29.46 & 2.42 & 2.47 & 4.45\\
\bottomrule 
\end{tabular}}
\caption{RMSE of estimators of the GATE assuming linear interference (simple linear interference and truncated linear-in-means) outcome models.} 
\label{table:estimation-results-linear-rmse} 
\end{table*}

\begin{table*}[ht]
\centering
\scalebox{0.9} {
\begin{tabular}{c l c c c c c c c c} 
\toprule 
Estimator & $\hat{\tau}^{\text{DM}}$ & $\hat{\tau}^{\text{H\'ajek}}_{\text{0.8,0.2}}$ & $\hat{\tau}_{\text{frac}}$ & $\hat{\tau}_{\text{num}}$ & post-ReFeX-LASSO & ReFeX-LASSO & $\hat{\tau}_{\text{oracle}}$\\
\midrule
    \multirow{1}{*}{Model 0} 
            &  0.004  & 0.120 & 0.021 & 0.009 & 0.106 & 0.042 & 0.004\\
            \midrule
    \multirow{1}{*}{Model 1} 
            &  -1.53  & -0.68 & -0.25 & -1.22 & -0.72 & -0.004 & -0.05\\
            \midrule
    \multirow{1}{*}{Model 2} 
            &  -2.06  & -1.16 & -0.39 & -1.48 & -0.56 & -0.29 & 0.008\\
            \midrule
    \multirow{1}{*}{Model 3} 
            &  -10.02 & -3.67 & -0.01 & -9.85 & 0.28 & 0.19 & -0.01\\
            \midrule
    \multirow{1}{*}{Model 4} 
            &  -15.02  & -5.46 & 0.003 &  -14.72 & 0.36 & 0.31 & 0.03\\
            \midrule
    \multirow{1}{*}{Model 5} 
            &  -9.92  & -6.55 & -4.52 & -9.82 & -0.01 & -0.24 & 0.68\\
            \midrule
    \multirow{1}{*}{Model 6} 
            & -29.67  & -22.28 & -18.21 & -29.45 & -0.21 & -0.31 & 2.77\\
\bottomrule 
\end{tabular}}
\caption{Empirical bias of estimators of the GATE assuming linear interference (simple linear interference and truncated linear-in-means) outcome models.} 
\label{table:estimation-results-linear-bias} 
\end{table*}

\subsection{Simulation results} \label{subsec: estimation-sim}
We study both the bias and the root mean squared error (RMSE) of each estimator under these varied models. Table~\ref{table:estimation-results-linear-rmse} and Table~\ref{table:estimation-results-linear-bias} show the RMSE and bias of several different estimators under linear interference. In these two tables, we show results of two kinds of regression adjustment estimators. $\hat{\tau}_{\text{frac}}$ is the regression adjustment estimator that adjusts for the fraction of treated neighbors and $\hat{\tau}_{\text{num}}$ adjusts for the number of treated neighbors. They are also considered in \cite{Chin+2019}. We also show the oracle adjustment estimator $\hat{\tau}_{\text{oracle}}$ as a reference, which marks the best we can do with full knowledge of the response model. Note that in some cases other estimators can perform better than the oracle since the oracle adjustment estimator only means we use oracle control covariates. The covariates are inevitably random and we are not averaging over all possible assignment vectors. Moreover, for the truncated linear-in-means model, the true covariates are highly correlated, causing the oracle adjustment estimator to have a large variance. Finally, $\hat{\tau}^{\text{DM}}$  and $\hat{\tau}^{\text{H\'ajek}}_{\text{0.8,0.2}}$ refer to the simple difference-in-mean estimator and the H\'ajek estimator in Equation~\eqref{sim-est: hajek} with $q = 0.8$ as we mentioned earlier.

First, if we look at the results for Model 0, i.e., when there is no interference, post-ReFeX-LASSO and ReFeX-LASSO all give better performance compared to the H\'ajek estimator. Second, for Model 1 and Model 2, the true interference mechanism is simple linear interference. As we can see from the first two rows of Table~\ref{table:estimation-results-linear-rmse} and Table~\ref{table:estimation-results-linear-bias}, if we fail to account for one feature, the bias and/or the RMSE can be large. Also, ReFeX-LASSO is dominating post-ReFeX-LASSO with significantly lower bias and RMSE since for this case ReFeX-LASSO is able to stop considering further features after the first iteration. For Model 3--6, the underlying model is a truncated linear-in-means model and the only difference between them is the stopping number $J$. For the models with $J = 2$ (Models 3 and 4), the interference is still local, i.e., within one's direct neighbors, but for $J = 3$ (Models 5 and 6), it is crucial to consider information from 2-hop neighbors. Our simulation results verify this intuition. We see that $\hat{\tau}_{\text{frac}}$ is doing well for model 3 and 4 but very poorly for model 5 and 6. Both post-ReFeX-LASSO and ReFeX-LASSO lead to estimators with relatively small bias and small RMSE for these more challenging response models.


Turning to the nonlinear model, Table~\ref{table:estimation-results-nonlinear} below shows our results there. In this case, $\hat{\tau}_{\text{frac}}$ and $\hat{\tau}_{\text{num}}$ represent the same regression adjustment estimators as in the linear case. Compared to difference-in-means and H\'ajek, ReFeX-LASSO leads to estimator with much better performance. Also, based on the comparison of $\hat{\tau}_{\text{frac}}$, $\hat{\tau}_{\text{num}}$ and ReFeX-LASSO, we see that, as in the linear interference case, even if we happen to adjust for some feature that is of importance, failing to take all relevant features into account will lead to estimators with either large bias, large variance, or both. In other words, ReFeX-LASSO helps one choose which set of features to adjust for and hence incur much smaller bias or variance.

\begin{table*}[ht]
\centering
\scalebox{1.0} {
\begin{tabular}{c l c c c c c c c c} 
\toprule 
Estimator & $\hat{\tau}^{\text{DM}}$  & $\hat{\tau}^{\text{H\'ajek}}_{\text{0.8,0.2}}$ & $\hat{\tau}_{\text{frac}}$ & $\hat{\tau}_{\text{num}}$ & post-ReFeX-LASSO & ReFeX-LASSO\\
\midrule
    \multirow{1}{*}{Bias} 
            &  -5.54  & -2.72 & -1.56 & -2.72 & 1.29 & 1.33\\
            \midrule
    \multirow{1}{*}{RMSE} 
            & 5.55  & 3.73 & 1.92 & 2.73 & 5.68 & 2.75\\
\bottomrule 
\end{tabular}}
\caption{RMSE and empirical bias of estimators of the GATE assuming a nonlinear interference (Appendix~\ref{appx: supp}) outcome model.} 
\label{table:estimation-results-nonlinear} 
\end{table*}

From these simulations we take away that ReFeX-LASSO is able to identify influential features for regression adjustment and hence produce an estimator with relatively good performance across many model specifications. We also see that ReFeX-LASSO generally, though not always, performs significantly better than post-ReFeX-LASSO. This is due to the fact that we select features sequentially and hence reduce the variance. In contrast, a standard regression adjustment estimator considered in \cite{Chin+2019} for some network features ($\hat{\tau}_{\text{frac}}$ and $\hat{\tau}_{\text{num}}$ in our simulations) can be far-off if we fail to choose the right feature. Finally, exposure mapping based estimator like the fractional-exposure-H\'ajek estimator can also be pretty bad if we have interference that is quite different from the assumptions of the exposure model that such estimators assume.

\subsection{Confidence interval for the GATE} \label{subsec: sim-ci-gate}
In Section~\ref{subsec: block-bootstrap} we introduced a way to construct a confidence interval for $\tau$ via a block bootstrap and gave an explicit algorithm for graph-based block construction. We now evaluate the empirical coverage of the resulting confidence interval from our block bootstrap. Throughout this section, we focus on 90\% confidence interval for $\tau$. Instead of using the Swarthmore College network as in the previous section, we use the farmer network in \cite{cai2015} where we have a larger and sparser network compared to the Swarthmore College network. In fact, the average size of 2-hop neighborhoods in Swarthmore network is 1092.65 and the average size of 3-hop neighborhoods in Swarthmore network is 1622.27. Hence, if we believe that interference is beyond 1-hop neighborhood, bootstrap will not perform well on such a dense graph since it is hard to create bootstrap samples that respect the structure in the original sample\footnote{We found that the block bootstrap still gives near to nominal coverage on Swarthmore nwtwork when interference is local, i.e., within direct neighbors.}. On the other hand, the farmer network in \cite{cai2015} is less dense with 2-hop neighborhoods having an average size 23.95 and 3-hop neighborhoods having an average size 41.49. We will introduce in more details about the background and the details of this network in Section~\ref{sec:real-data-example}. In general, if the network is too dense to produce well-isolated and balanced clusters then the bootstrap would fail. One thing to notice is that the farmer network itself is associated with a natural clustering based on which village the each farmer lives in, namely, each village can be viewed as a cluster in the network. In our simulations here, we thus also show the results of constructing the confidence interval with block bootstrap of ReFeX-LASSO that uses this ``oracle clustering'' of villages. Finally, since we have a sparser network (making interference easier to manage), we consider two different sets of parameters for linear models that make the effect from number of treated neighbors larger (and thus GATE estimation harder). Table~\ref{table:additional-outcome-model-linear} shows the values of the parameters, loosely based on Model 2 (thus named 2a and 2b)

We first evaluate the effectiveness of such a bootstrap method. We assume linear interference and consider Model 3-6 as well as Model 2a and 2b. We fix $\ell = 3$, $B = 100$ and the coverage is calculated by repeating the whole process 100 times.
\begin{table*}[ht]
\centering
\scalebox{1.0} {
\begin{tabular}{c l c c} 
\toprule 
Model type & $(\alpha_0, \alpha_1)$ & $(\xi_0, \xi_1)$ & $(\gamma_0, \gamma_1)$\\
\midrule
    \multirow{1}{*}{Model 2a} 
            & (0, 2) & (1, 3) & (0.01, 0.025) \\
\midrule
    \multirow{1}{*}{Model 2b} 
            & (0, 2) & (1, 3) & (0.05, 0.15) \\
\bottomrule 
\end{tabular}}
\caption{Additional parameters of simple linear interference model (\eqref{sim_outcome_model: simple0} and \eqref{sim_outcome_model: simple1}) used in simulation experiments.} 
\label{table:additional-outcome-model-linear} 
\end{table*}
Table~\ref{table:ci-results-linear-coverage} and \ref{table:ci-results-linear-length} show the coverage and the average length of the confidence intervals constructed from our block bootstrap of post-ReFeX-LASSO and ReFeX-LASSO. To show the necessity of using block bootstrap and of considering the randomness of the assignment vector, we also include the result of constructing confidence interval using a naive bootstrap where we just sample each unit with replacement.


\begin{table*}[ht]
\centering
\scalebox{0.85} {
\begin{tabular}{c c c c c} 
\toprule 
Model & post-ReFeX-LASSO & ReFeX-LASSO & Naive Bootstrap & Bootstrap with oracle clustering\\
\midrule
    \multirow{1}{*}{Model 2a} 
            & 93\% & 92\% & 94\% & 92\%\\
            \midrule
    \multirow{1}{*}{Model 2b} 
            & 92\% & 96\% & 93\% & 95\%\\
            \midrule
    \multirow{1}{*}{Model 3} 
            & 90\%  & 90\% & 83\% & 91\%\\
            \midrule
    \multirow{1}{*}{Model 4} 
            & 88\%  & 87\% & 80\% & 91\%\\
            \midrule
    \multirow{1}{*}{Model 5} 
            & 91\% & 93\% & 84\% & 92\% \\
            \midrule
    \multirow{1}{*}{Model 6} 
            & 90\% & 91\% & 67\% & 93\%\\
\bottomrule 
\end{tabular}}
\caption{Coverage of different bootstrap 90\% confidence intervals for the GATE with linear interference (simple linear interference and truncated linear-in-means) outcome models.} 
\label{table:ci-results-linear-coverage} 
\end{table*}

\begin{table*}[ht]
\centering
\scalebox{0.85} {
\begin{tabular}{c c c c c} 
\toprule 
Model & post-ReFeX-LASSO & ReFeX-LASSO & Naive Bootstrap & Bootstrap with oracle clustering\\
\midrule
    \multirow{1}{*}{Model 2a} 
            &0.245 & 0.220 & 0.235 & 0.228\\
            \midrule
    \multirow{1}{*}{Model 2b} 
            & 0.435 & 0.384  & 0.390 & 0.382\\
            \midrule
    \multirow{1}{*}{Model 3} 
            & 0.403 & 0.380 & 0.330 & 0.414\\
            \midrule
    \multirow{1}{*}{Model 4} 
            & 0.569 & 0.534  & 0.437 & 0.593\\
            \midrule
    \multirow{1}{*}{Model 5} 
            & 0.552 & 0.549 & 0.431 & 0.567\\
            \midrule
    \multirow{1}{*}{Model 6} 
            & 1.316 & 1.316 & 0.751 & 1.412\\
\bottomrule 
\end{tabular}}
\caption{Average length of 90\% confidence intervals for the GATE with linear interference (simple linear interference and truncated linear-in-means) outcome models.} 
\label{table:ci-results-linear-length} 
\end{table*}

As we can see from the results, our block bootstrap gives us near nominal coverage for ReFeX-LASSO and slightly worse but still close to nominal coverage for post-ReFeX-LASSO. However, the naive bootstrap fails to deliver confidence interval with nominal coverage. In fact, naive bootstrap-based confidence intervals can give us very bad coverage in some cases.  We are also able to get good confidence intervals if we use the oracle clustering that is associated with the network. In scenarios where there are clear natural clusters in the network, these clusters can be a good default choice to use for block bootstrap. Moreover, as is shown in Table~\ref{table:ci-results-linear-length}, both the block bootstrap confidence interval for ReFeX-LASSO and the block bootstrap confidence interval for post-ReFeX-LASSO are of reasonable length.

We conclude this section with a simulation to show why choosing the $k$ for $k$-hop max clustering adaptively in our block bootstrap procedure is important and how partitioning the graph into just two clusters fails to give correct coverage. To this end, we consider using 2-hop max and 3-hop max clustering to divide units into clusters as well as randomly divide units into five clusters, i.i.d., without considering the underlying graph structure. We choose to consider 2-hop max and 3-hop max as we found in the simulations that in most of the cases ReFeX-LASSO will stop after selecting features about 2-hop neighborhoods. For Cai network, on average 2-hop max clustering and 3-hop clustering produce 267 and 269 clusters respectively. We choose to compare them with a five-cluster clustering as five is a lot less than the number of clusters we may have using $k$-hop max clustering. We rerun the block bootstrap procedure with these new clusters for Model 6 using ReFeX-LASSO. Table~\ref{table:ci-results-heu} shows the coverage of the confidence intervals. As we can see, contrast to the 91\% coverage in Tablr~\ref{table:ci-results-linear-coverage} provided by the adaptive $k$-hop max based block bootstrap, all these three clustering methods fail to give us nominal coverage. In particular, completely ignoring the graph structure (``five clusters'') leads to confidence intervals with really poor coverage.

\begin{table*}[ht]
\centering
\scalebox{1.0} {
\begin{tabular}{c c c c c} 
\toprule 
Model & 2-hop max & 3-hop max & Five clusters\\
\midrule
    \multirow{1}{*}{Model 6} 
            & 84\% & 89\% & 45\%\\
\bottomrule 
\end{tabular}}
\caption{Coverage of block bootstrap 90\% confidence intervals for the GATE using different graph clustering algorithms with Model 6 as the true outcome model.} 
\label{table:ci-results-heu} 
\end{table*}

\section{Real data example} \label{sec:real-data-example}
In this section, we would like to apply our method to a real experiment where interference is known to exist and simple estimators such as difference-in-means would give poor GATE estimates. We consider data from the intervention in \cite{cai2015}. They designed a randomized experiment to study the role of social networks on insurance adoption in rural China. Specifically, a random subset of farmers were provided with intensive information sessions about the an insurance product. \cite{cai2015} found that the diffusion of insurance knowledge drove network effects in product adoption. Hence, this data is ideal for our purpose in the sense that we know for sure that SUTVA is violated and we should not trust the simple difference-in-means estimate for estimating the GATE. Moreover, though we know that network effects do exist, defining an exact exposure model as in \cite{aronow2017} is difficult. Hence, analysis done in \cite{Chin+2019} is limited since there only four pre-specified features were considered and hence the regression adjustment estimator implicitly assumed a certain exposure model. We revisit this experiment and estimate the GATE using our method. 

In the original field experiment in \cite{cai2015} the intensive information sessions were offered in two separate rounds, leading to four separate treatment arms. For our purpose, following \cite{Chin+2019}, we simplify the experiment by viewing the two intensive information sessions as the same treatment arm. Hence, we reduce the original field experiment to a binary randomized experiment. As in \cite{cai2015}, the outcome variable is set to be the binary
indicator variable for the weather insurance adoption, and we do not include villagers whose treatment or response information was missing as well as villagers whose network information was missing. We also combine all the villages into one social network, denoting this single social network by $G$. In summary, we have 4,382 nodes and 17,069 edges. This network is also the one that we used in Section~\ref{subsec: sim-ci-gate}.

The first step for our method is generating model-free covariates. We use exactly the same set of base features as in the previous simulation section---fraction of treated neighbors, number of treated neighbors, fraction of edges in neighborhood that connects a treated unit and a control unit and also fraction of edges in neighborhood that connects a treated unit and a treated unit. We then use ReFeX-LASSO to generate a group of covariates, using mean and variance aggregation functions (again, as in the previous simulation section) and estimate the GATE by adjusting for these covariates with a linear model. We compare the standard error estimate from block bootstrap with the one computed in \cite{Chin+2019}.
\begin{table}[ht]
\centering
\begin{tabular}{lrr}
  \hline
Estimator & Estimate & Standard Error \\ 
  \hline
DM & 0.078 & ------ \\ 
  H\'ajek\_1hop ($q = 0.75$) & 0.163 & ------ \\ 
  H\'ajek\_2hop ($q = 0.75$) & 0.167 & ------ \\ 
  $\hat{\tau}_{\text{chin}}$ & 0.122 & 0.056 \\ 
  $\hat{\tau}_{\text{num}}$ & 0.178 & 0.027 \\
  $\hat{\tau}_{\text{refex-lasso}}$ & 0.178 & 0.043 \\
   \hline
\end{tabular}
\caption{Estimates and standard errors of different estimators for the global average treatment effect on insurance adoption~\cite{cai2015}.} 
\label{table:cai-est}
\end{table}

Table \ref{table:cai-est} shows the resulting GATE estimates, where $\hat{\tau}_{\text{chin}}$ is the estimator in \cite{Chin+2019} that adjusts for four covariates: the fraction of treated neighbors, the number of treated neighbors, the fraction of treated neighbors in 2-hop neighborhoods, the number of treated neighbors in 2-hop neighborhoods. Meanwhile, $\hat{\tau}_{\text{num}}$ only adjusts for the number of treated neighbors and $\hat{\tau}_{\text{refex-lasso}}$ is the ReFeX-LASSO based adjustment estimator. DM refers to the difference-in-means estimator. H\'ajek\_1hop assumes a fractional exposure model for 1-hop neighborhood while H\'ajek\_2hop assumes a fractional exposure model for 2-hop neighborhood, i.e., we use \eqref{sim-est: hajek} but consider 2-hop neighbors instead. The intuition is that sometimes units that are not direct neighbors but neighbors of direct neighbors matter as well and by considering fractional exposure model for 2-hop neighborhood we are able to take these units into account for the exposure model. We notice that $\hat{\tau}_{\text{num}}$ and $\hat{\tau}_{\text{refex-lasso}}$ give us the same estimate and indeed, the only covariate selected from ReFeX-LASSO is the number of treated neighbors. Compared to $\hat{\tau}_{\text{chin}}$, $\hat{\tau}_{\text{refex-lasso}}$ has smaller standard error and a larger estimate of the effect. Finally, though $\hat{\tau}_{\text{num}}$ and $\hat{\tau}_{\text{refex-lasso}}$ give nearly the same estimates (same up to three decimal digits), we see that the former as a smaller standard error. The reasons are twofold. First, bootstrap in general is conservative. Second, ReFeX estimate should have larger variance as we have a random selection procedure involved.

\section{Discussion}
In this paper, we have developed a method to do estimation and inference for the global average treatment effect (GATE) when network interference is present. We develop a procedure that can be used to estimate the GATE without pre-specifying either exposure mappings or outcome models. We also give a way to construct confidence intervals for the GATE using a block bootstrap. We evaluate our method both through simulations and a real data example.

Many interesting avenues of further investigation have been left unexplored in this manu-script. First, our results only consider designs that satisfy the uniformity assumption (e.g., Bernoulli design): this is, of course, limiting, but it does present a useful benchmark. We are particularly interested in exploring how to extend our work to designs that violate the uniformity assumption such as cluster randomized design. This is challenging since the covariates we adjust for may be correlated with the treatment assignment. Second, while our simulations show that the block bootstrap behaves well in practice, formal results are absent for anything other than a simple toy setting. Third, beyond linear adjustment we may also want to have a completely nonlinear model to estimate the outcomes using the covariates returned from the ReFeX-LASSO feature generation and selection process.

\subsection*{Acknowledgements}
This work was supported in part by ARO
MURI award \#W911NF-20-1-0252 and NSF CAREER Award \#2143176.

\bibliographystyle{agsm}
\bibliography{submission.bib}       

\newpage

\begin{appendix}

\section{Proofs} \label{appendix: proofs}
The proofs of Proposition~\ref{prop: selection1} and Proposition~\ref{prop: selection2} will be exactly the same as the proofs of Proposition 1 and Proposition 2 in \cite{sequentialLasso} once we realize that as long as the features that are included in the penalty do not overlap with the features that have already been selected then we can just use the proofs in \cite{sequentialLasso}, i.e., though our sequential selection procedure is different from that in \cite{sequentialLasso}, we share the same properties that make these two propositions hold.
\begin{proof} [Proof of Proposition~\ref{prop: selection1}]
We denote by $X(s)$ the design matrix with features in $s$, i.e., if $|s| = m$ then $X(s)$ is a $n \times m$ matrix. At the $(t+1)-th$ iteration, $\beta$ will be a $(|s_{*t}| + i_{t+1})$-dimensional vector and we denote by $\beta(s)$ the $|s|$-dimensional vector with only coordinates of $\beta$ that are in $s$. Finally, we denote by $A_{t+1}$ the set $\{u_1^{t+1}, u_2^{t+1}, \cdots, u_{i_{t+1}}^{t+1}\}$. 

First we note that since $u_j^{t+1} \in \mathcal{R}(s_{*t})$, $\exists v \in \mathbb{R}^{|s_{*t}|}$ such that $u_j^{t+1} = X(s_{*t})v$. We now consider the objective function $l_{t + 1}$ at the $(t + 1)$-th iteration.
\begin{align*}
    l_{t+1} &= \|y - X(s_{*t})(\beta(s_{*t}) + \beta(\{j\})v) - X(A_{t+1}/\{j\})\beta(A_{t+1}/\{j\})\|_2^2 \\
    & \qquad + \lambda\left(|\beta(\{j\})| + \|\beta(A_{t+1}/\{j\})\|_1\right) \\
    &= \|y - X(s_{*t})\tilde{\beta}(s_{*t}) - X(A_{t+1}/\{j\})\beta(A_{t+1}/\{j\})\|_2^2 \\
    & \qquad + \lambda\left(|\beta(\{j\})| + \|\beta(A_{t+1}/\{j\})\|_1\right) \\
    & \geq  \|y - X(s_{*t})\tilde{\beta}(s_{*t}) - X(A_{t+1}/\{j\})\beta(A_{t+1}/\{j\})\|_2^2 \\
    & \qquad + \lambda\|\beta(A_{t+1}/\{j\})\|_1
\end{align*}
Hence, when $l_{t+1}$ is minimized, $\beta(\{j\})$ must be 0 and $j \notin s_{*(t+1)}$.
\end{proof}

\begin{proof} [Proof of Proposition~\ref{prop: selection2}]
Again we consider the objective function at the $(t+1)$-th iteration. 
\begin{equation*}
    l_{t+1} = \|y - X(s_{*t})\beta(s_{*t}) - X(A_{t+1})\beta(A_{t+1})\|_2^2 + \lambda\|\beta(A_{t+1})\|_1.
\end{equation*}
Differentiating $l_{t+1}$ with respect to $\beta(s_{*t})$, we have
\begin{align*}
    \frac{\partial l_{t+1}}{\partial\beta(s_{*t})} = -2X^T(s_{*t})y + 2X^T(s_{*t})X(s_{*t})\beta(s_{*t}) + 2X^T(s_{*t})X(A_{t+1})\beta(A_{t+1}).
\end{align*}
Setting the above derivative to zero, we have that
\begin{equation}
    \hat{\beta}(s_{*t}) = [X^T(s_{*t})X(s_{*t})]^{-1}X^T(s_{*t})[y - X(A_{t+1})\beta(A_{t+1})]. \label{eq:beta_hat}
\end{equation}
Substituting \eqref{eq:beta_hat} into the objective function, we obtain
\begin{align*}
    l_{t+1} &= \|y - X(s_{*t})\beta(s_{*t}) - X(A_{t+1})\beta(A_{t+1})\|_2^2 + \lambda\|\beta(A_{t+1})\|_1 \\
    &= \|y - X(s_{*t})[X^T(s_{*t})X(s_{*t})]^{-1}X^T(s_{*t})[y - X(A_{t+1})\beta(A_{t+1})] - X(A_{t+1})\beta(A_{t+1})\|_2^2 \\
    & \qquad + \lambda\|\beta(A_{t+1})\|_1 \\
    &= \|(I - X(s_{*t})[X^T(s_{*t})X(s_{*t})]^{-1}X^T(s_{*t}))y \\
    &\qquad - (I - X(s_{*t})[X^T(s_{*t})X(s_{*t})]^{-1}X^T(s_{*t}))X(A_{t+1})\beta(A_{t+1})\|_2^2 \\
    &\qquad + \lambda\|\beta(A_{t+1})\|_1.
\end{align*}
Hence minimizing $l_{t+1}$ does not affect $\hat{\beta}(s_{*t})$ and $\hat{\beta}(s_{*t})$ will be almost surely nonzero.
\end{proof}

Now we show the proof Theorem~\ref{thm: consistency}. We will make use of standard results about LASSO $\ell_2$-error bounds. Recall the following result \cite{wainwright_2019}:
\begin{lemma} \label{thm: lasso-pred-error}
Suppose $y = X\theta^{*} + w$ ($X \in \mathbb{R}^{n \times d}$) and consider the Lagrangian Lasso with a strictly positive regularization parameter $\lambda_n \geq 2\|\frac{\mathbf{X}^Tw}{n}\|_{\infty}$. Suppose further that $\theta^*$ is supported on a subset $S$ of cardinality $s$, and the design matrix satisfies the $(\kappa; 3)$-RE condition over $S$, then
    \begin{equation*}
        \|\hat{\theta} - \theta^{*}\|_2 \leq \frac{3}{\kappa}\sqrt{s}\lambda_n.
    \end{equation*}
\end{lemma}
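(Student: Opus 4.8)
The plan is to run the classical ``basic inequality'' argument for the Lagrangian Lasso and then convert the resulting prediction-error bound into an $\ell_2$ estimation bound by invoking the $(\kappa; 3)$-RE condition. Writing $\hat{\Delta} = \hat{\theta} - \theta^{*}$, I would first exploit the optimality of $\hat{\theta}$: since $\hat{\theta}$ minimizes $\frac{1}{2n}\|y - X\theta\|_2^2 + \lambda_n\|\theta\|_1$, evaluating the objective at $\theta = \theta^{*}$ and substituting $y = X\theta^{*} + w$ (so that $y - X\hat{\theta} = w - X\hat{\Delta}$) yields, after cancelling the common $\frac{1}{2n}\|w\|_2^2$ term,
\begin{equation*}
\frac{1}{2n}\|X\hat{\Delta}\|_2^2 \leq \frac{1}{n}\langle w, X\hat{\Delta}\rangle + \lambda_n\left(\|\theta^{*}\|_1 - \|\theta^{*} + \hat{\Delta}\|_1\right).
\end{equation*}

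The next step is to control the stochastic term by H\"older's inequality: $\frac{1}{n}\langle w, X\hat{\Delta}\rangle = \frac{1}{n}\langle X^Tw, \hat{\Delta}\rangle \leq \|\frac{X^Tw}{n}\|_{\infty}\|\hat{\Delta}\|_1 \leq \frac{\lambda_n}{2}\|\hat{\Delta}\|_1$, where the last inequality uses precisely the hypothesis $\lambda_n \geq 2\|\frac{X^Tw}{n}\|_{\infty}$. This is the single place in the argument where the prescribed size of $\lambda_n$ is consumed, and it is the only ``stochastic'' input, packaged here as a deterministic event.

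I would then establish cone membership. Because $\theta^{*}$ is supported on $S$, decomposability of the $\ell_1$ norm across $S$ and $S^c$ gives $\|\theta^{*} + \hat{\Delta}\|_1 \geq \|\theta^{*}\|_1 - \|\hat{\Delta}_S\|_1 + \|\hat{\Delta}_{S^c}\|_1$, hence $\|\theta^{*}\|_1 - \|\theta^{*} + \hat{\Delta}\|_1 \leq \|\hat{\Delta}_S\|_1 - \|\hat{\Delta}_{S^c}\|_1$. Combining this with the noise bound, using $\|\hat{\Delta}\|_1 = \|\hat{\Delta}_S\|_1 + \|\hat{\Delta}_{S^c}\|_1$ and the nonnegativity $\frac{1}{2n}\|X\hat{\Delta}\|_2^2 \geq 0$, I arrive at $\|\hat{\Delta}_{S^c}\|_1 \leq 3\|\hat{\Delta}_S\|_1$, i.e.\ $\hat{\Delta} \in \mathbb{C}_3(S)$. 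This is exactly the cone over which the $(\kappa; 3)$-RE condition is assumed; making the constant come out to $3$ rather than some other value is the one bookkeeping point worth watching, and it is what dictates the choice of RE parameter in the hypothesis.

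Finally, I would chain the two bounds on the prediction error. Feeding the cone decomposition back into the basic inequality and discarding a nonpositive term gives $\frac{1}{2n}\|X\hat{\Delta}\|_2^2 \leq \frac{3\lambda_n}{2}\|\hat{\Delta}_S\|_1 \leq \frac{3\lambda_n}{2}\sqrt{s}\|\hat{\Delta}\|_2$, where the last step uses $\|\hat{\Delta}_S\|_1 \leq \sqrt{s}\|\hat{\Delta}_S\|_2 \leq \sqrt{s}\|\hat{\Delta}\|_2$. On the other side, applying the RE condition to $\hat{\Delta} \in \mathbb{C}_3(S)$ gives $\frac{1}{2n}\|X\hat{\Delta}\|_2^2 \geq \frac{\kappa}{2}\|\hat{\Delta}\|_2^2$. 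Cancelling one factor of $\|\hat{\Delta}\|_2$ then produces $\|\hat{\Delta}\|_2 \leq \frac{3}{\kappa}\sqrt{s}\lambda_n$, as claimed. Every step is deterministic on the event $\lambda_n \geq 2\|X^Tw/n\|_{\infty}$, so there is no genuine obstacle internal to the lemma; the real work when \emph{applying} it (in Theorem~\ref{thm: consistency}) is to verify that this event holds with high probability for the data-driven Lasso tuning used in ReFeX-LASSO and to translate the coefficient bound into a bound on $|\hat{\tau} - \tau|$.
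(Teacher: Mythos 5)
Your proof is correct, and it is precisely the standard basic-inequality/cone argument---H\"older's inequality combined with $\lambda_n \ge 2\|X^Tw/n\|_\infty$, decomposability of the $\ell_1$ norm giving $\|\hat{\Delta}_{S^c}\|_1 \le 3\|\hat{\Delta}_S\|_1$, and then the $(\kappa;3)$-RE condition---from \cite{wainwright_2019}. Note that the paper itself does not prove this lemma: it recalls it from that reference (the proof environment that follows it in the appendix only establishes the Gaussian tail bound validating the choice $\lambda_n = 2C\sigma(\sqrt{2\log d/n} + \delta)$), so your argument supplies exactly the proof the paper delegates to the literature, and your closing remark correctly identifies that the real work in the paper is the high-probability verification of the event $\lambda_n \ge 2\|X^Tw/n\|_\infty$ and the translation into a bound on $|\hat{\tau} - \tau|$.
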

We can show that if the design matrix is $C-$column normalized, i.e.,
\begin{equation*}
\max_{j = 1, \cdots, d}\frac{\|X_j\|_2}{\sqrt{n}} \leq C,
\end{equation*}
then the choice $\lambda_n = 2C\sigma(\sqrt{\frac{2\log{d}}{n}} + \delta)$ is valid with probability at least $1 - 2e^{-\frac{n\delta^2}{2}}$. We thus proceed with the main proof.
\begin{proof}
Notice that $\|\frac{\mathbf{X}^Tw}{n}\|_{\infty}$ corresponds to the absolute maximum of $d$ zero-mean Gaussian random variables by definition of infinity norm and each with variance at most $\frac{C^2\sigma^2}{n}$. Hence, from the Gaussian tail bound, we then have
\begin{equation*}
    \mathbb{P}\left(\bigg\|\frac{\mathbf{X}^Tw}{n}\bigg\|_{\infty} \geq C\sigma\left(\sqrt{\frac{2\log d}{n}} + \delta\right)\right) \leq 2e^{-\frac{n\delta^2}{2}}.
\end{equation*}
\end{proof}
With this particular choice of $\lambda_n$, the lemma implies the upper bound
\begin{equation}
    \|\hat{\theta} - \theta^{*}\|_2 \leq \frac{6C\sigma}{\kappa}\sqrt{s}\left(\sqrt{\frac{2\log{d}}{n}} + \delta\right) \label{eq: l2-error-bound}
\end{equation}
with the same high probability \cite{wainwright_2019}. 

Now we are ready to prove consistency. First notice that
\begin{align*}
    |\hat{\tau} - \tau| &= \bigg|\frac{1}{n}\sum_{i = 1}^{n}\left[(\hat{\beta}_1 - \beta_1^{*})^Tu_i^{gt} - (\hat{\beta}_0 - \beta_0^{*})^Tu_i^{gc}\right]\bigg| \\ 
    &\leq \frac{1}{n}\sum_{i = 1}^{n}\bigg|\left[(\hat{\beta}_1 - \beta_1^{*})^Tu_i^{gt} - (\hat{\beta}_0 - \beta_0^{*})^Tu_i^{gc}\right]\bigg| \\ 
    &\leq \frac{1}{n}\sum_{i = 1}^{n}(\|\hat{\beta}_1 - \beta_1^{*}\|_2\|u_i^{gt}\|_2 + \|\hat{\beta}_0 - \beta_0^{*}\|_2\|u_i^{gc}\|_2) \\
    &\leq C\sqrt{M}(\|\hat{\beta}_1 - \beta_1^{*}\|_2 + \|\hat{\beta}_0 - \beta_0^{*}\|_2)
\end{align*}
Let $n_0$ be the number of control units and $n_1$ be the number of treated units. Then by strong law of large numbers, $\frac{n_0}{n} \xrightarrow{a.s.} 1-p$ and $\frac{n_1}{n} \xrightarrow{a.s.} p$. Since the design matrices $U^0$ and $U^1$ satisfy the RE condition, both $\|\hat{\beta}_1 - \beta_1^{*}\|_2$ and $\|\hat{\beta}_0 - \beta_0^{*}\|_2$ converge to 0 in probability by the bound \eqref{eq: l2-error-bound}. Thus $\hat{\tau}\xrightarrow{\mathbb{P}} \tau$.

\begin{proof} [Proof of Proposition~\ref{prop: consistency-example}]
We show that for the setup in Proposition~\ref{prop: consistency-example}, the design matrices satisfy RE condition with probability going to 1. In our proof, the first column of the design matrix represents the fration of treated neighbors while the second column represents the number of treated neighbors. We introduce one extra notations: for each unit $i$, we denote by $m_i$ the size of the cluster unit $i$ belongs to. We show the proof for the design matrix for control units, $U^0$. Similar proof can be done for $U^1$. After centering, the design matrix we use for estimating $\beta_0$ will be
\begin{equation*}
    \tilde{U}^0 = \begin{bmatrix}
    \frac{1}{n_0}\sum_{i: W_i = 0}(u_i^1 - \bar{u}^1)^2 & \frac{1}{n_0}\sum_{i: W_i = 0}(u_i^1 - \bar{u}^1)(u_i^2 - \bar{u}^2) \\
    \frac{1}{n_0}\sum_{i: W_i = 0}(u_i^1 - \bar{u}^1)(u_i^2 - \bar{u}^2) & \frac{1}{n_0}\sum_{i: W_i = 0}(u_i^2 - \bar{u}^2)^2
    \end{bmatrix}.
\end{equation*}
Here $\bar{u}^1 = \frac{1}{n_0}\sum_{i: W_i = 0}u_i^1$ and $\bar{u}^2 = \frac{1}{n_0}\sum_{i: W_i = 0}u_i^2$. Since the true $\beta_0$ is non-zero only for the first feature, $\mathbb{C}_3(S) = \{\Delta \in \mathbb{R}^2: |\Delta_2| \leq 3|\Delta_1|\}$. For such $\Delta$, we have that
\begin{equation*}
    \frac{1}{n_0}\|\tilde{U}^0\Delta\|_2^2 = \Delta_1^2\frac{1}{n_0}\sum_{i: W_i = 0}(u_i^1 - \bar{u}^1)^2 + 2\Delta_1\Delta_2\frac{1}{n_0}\sum_{i: W_i = 0}(u_i^1 - \bar{u}^1)(u_i^2 - \bar{u}^2) + \Delta_2^2\frac{1}{n_0}\sum_{i: W_i = 0}(u_i^2 - \bar{u}^2)^2
\end{equation*}
Note that since $|\Delta_2| \leq 3|\Delta_1|$, $\Delta_1\Delta_2 \geq -|\Delta_1||\Delta_2| \geq -\frac{1}{3}\Delta_2^2$. Therefore,
\begin{equation}
\begin{split}
    \frac{1}{n_0}\|\tilde{U}^0\Delta\|_2^2 &\geq \frac{1}{n_0}\sum_{i: W_i = 0}(u_i^1 - \bar{u}^1)^2\Delta_1^2 \\
    &+ \left(\frac{1}{n_0}\sum_{i: W_i = 0}(u_i^2 - \bar{u}^2)^2-\frac{1}{3}\frac{1}{n_0}\sum_{i: W_i = 0}(u_i^1 - \bar{u}^1)(u_i^2 - \bar{u}^2)\right)\Delta_2^2. \label{eq:lower_bound}
    \end{split}
\end{equation}
To ease notations, we let \circled{1} = $\frac{1}{n_0}\sum_{i: W_i = 0}(u_i^1 - \bar{u}^1)^2$, \circled{2} = $\frac{1}{n_0}\sum_{i: W_i = 0}(u_i^2 - \bar{u}^2)^2$ and \circled{3} = $\frac{1}{n_0}\sum_{i: W_i = 0}(u_i^1 - \bar{u}^1)(u_i^2 - \bar{u}^2)\Delta_2^2$. Now, we analyze each term separately.
\begin{align*}
    \circled{1} &= \frac{1}{n_0}\sum_{i: W_i = 0}(u_i^1 - \bar{u}^1)^2 \\
    &= \frac{1}{n_0}\sum_{i: W_i = 0}(u_i^1)^2 - (\bar{u}^1)^2 \\
    &= \frac{n}{n_0}\frac{1}{n}\sum_{i=1}^{n}(1 - W_i)(u_i^1)^ 2 - (\bar{u}^1)^2 \\
    &= \frac{n}{n_0}\frac{1}{n}\sum_{i=1}^{n}(1 - W_i)(u_i^1)^2 -  \left(\frac{n}{n_0}\frac{1}{n}\sum_{i = 1}^n(1 - W_i)u_i^1\right)^2.
\end{align*}
Consider the random variables $\{(1-W_i)(u_i^1)^2\}_{i=1}^{n}$ and $\{(1-W_i)u_i^1\}_{i=1}^{n}$. Since we have disjoint clusters and the number of units in each cluster is bounded by $M$, the sum of covariance term is at most $O(n)$ and hence weak law of large numbers applies for both sequences. Therefore,
\begin{equation*}
    \frac{1}{n}\sum_{i=1}^{n}(1 - W_i)(u_i^1)^2 - \left[p(1-p)^2\frac{1}{n}\sum_{i=1}^{n}\frac{1}{m_i-1} + p^2(1-p)\right] \xrightarrow{\mathbb{P}} 0.
\end{equation*}
Similarly,
\begin{equation*}
    \frac{1}{n}\sum_{i=1}^{n}(1 - W_i)u_i^1 \xrightarrow{\mathbb{P}} p(1-p).
\end{equation*}
Note that $n/n_0 \xrightarrow{\mathbb{P}} 1/(1-p)$, we obtain
\begin{equation*}
     \circled{1} - \left[p(1-p)\frac{1}{n}\sum_{i=1}^{n}\frac{1}{m_i-1}\right] \xrightarrow{\mathbb{P}} 0.
\end{equation*}
Here \circled{2} can be done similarly:
\begin{equation*}
     \circled{2} - \left[p(1-p)\frac{1}{n}\sum_{i=1}^n(m_i-1) + p^2\frac{1}{n}\sum_{i=1}^n(m_i-1)^2 - p^2\left(\frac{1}{n}\sum_{i=1}^n(m_i-1)\right)\right] \xrightarrow{\mathbb{P}} 0.
\end{equation*}
For \circled{3}, we have that
\begin{align*}
    \circled{3} &= \frac{1}{n_0}\sum_{i: W_i = 0}(u_i^1 - \bar{u}^1)(u_i^2 - \bar{u}^2) \\
    &= \frac{1}{n_0}\sum_{i: W_i = 0}u_i^1u_i^2 - \bar{u}^1\bar{u}^2.
\end{align*}
Notice that we have already shown that
\begin{equation*}
    \bar{u}^1 \xrightarrow{\mathbb{P}} p, \qquad \bar{u}^2 \xrightarrow{\mathbb{P}} p\frac{1}{n}\sum_{i=1}^n(m_i-1).
\end{equation*}
Hence, $\bar{u}^1\bar{u}^2 \xrightarrow{\mathbb{P}} p^2\frac{1}{n}\sum_{i=1}^n(m_i-1)$. Moreover,
\begin{align*}
    \frac{1}{n_0}\sum_{i: W_i = 0}u_i^1u_i^2 &= \frac{n}{n_0}\frac{1}{n}\sum_{i=1}^{n}(1 - W_i)u_i^1u_i^2.
\end{align*}
Again by weak law of large numbers,
\begin{equation*}
    \frac{1}{n}\sum_{i=1}^{n}(1 - W_i)u_i^1u_i^2 - \left[p(1-p)^2 + p^2(1-p)\frac{1}{n}\sum_{i=1}^n(m_i-1)\right] \xrightarrow{\mathbb{P}} 0.
\end{equation*}
Hence, $\circled{3} - \left[p(1-p) + p^2\frac{1}{n}\sum_{i=1}^n(m_i-1)\right] \xrightarrow{\mathbb{P}} 0$. Put all these pieces together, we obtain
\begin{equation*}
\begin{split}
    \text{RHS of } \eqref{eq:lower_bound} &- \Biggl\{\left[p(1-p)\frac{1}{n}\sum_{i=1}^{n}\frac{1}{m_i-1}\right]\Delta_1^2 \biggl.\\ 
    &+ \left[p(1-p)\frac{1}{n}\sum_{i=1}^n(m_i-1) + p^2\frac{1}{n}\sum_{i=1}^n(m_i-1)^2 - p^2\left(\frac{1}{n}\sum_{i=1}^n(m_i-1)\right)\right. \\
    &- \left.\frac{1}{3}\left(p(1-p) + p^2\frac{1}{n}\sum_{i=1}^n(m_i-1)\right)\right]\Delta_2^2 \Biggr\} \xrightarrow{\mathbb{P}} 0.
\end{split}
\end{equation*}
Notice that $m_i \geq 3$ and $m_i \leq M$ for each $i$, we conclude that for $\kappa = \min\{\frac{p(1-p)}{M-1}, \frac{5}{3}p - \frac{1}{3}p^2\}$,
\begin{equation*}
    \frac{1}{n_0}\|\tilde{U}^0\Delta\|_2^2 \geq \kappa \|\Delta\|_2^2 \quad \text{w.p.} \quad \rightarrow 1.
\end{equation*}
\end{proof}
\section{Supplementary Materials} \label{appx: supp}
\begin{definition}[The nonlinear model in simulations]
Suppose the assignment vector is $w$, then for each unit $i$, the response is
\begin{equation*}
y_i(w) = -5 + 2z_iw_i + 0.03\nu_i + \frac{1}{1 + 0.001\exp{(-0.03\nu_i + 9)}} + \frac{10}{3 + \exp{(-8\rho_i + 3.2)}} + \epsilon_i.
\end{equation*}
Here, $z_i, \epsilon_i \overset{\mathrm{i.i.d}}{\sim} \mathcal{N}(0,1)$, $\rho_i$ is the fraction of treated neighbors for unit $i$ and $\nu_i$ is the number of treated neighbors for unit $i$.
\end{definition}

\end{appendix}

\end{document}